\documentclass{LMCS}
\usepackage{hyperref}
\theoremstyle{plain}

\newcommand{\m}[1]{{\mbox{\uppercase {\bf {#1}}}}}
\newcommand{\sm}[1]{{\mbox{\scriptsize {\uppercase {\bf {#1}}}}}}
\newcommand{\vr}[1]{{\uppercase {\mathcal {#1}}}}

\newcommand{\pol}[1]{{\rm Pol\:\m #1}}

\newcommand{\con}{{\rm Con\:}}
\newcommand{\cn}[1]{{\con\m {#1}}}

\newcommand{\HSP}{{\sf HSP}}
\newcommand{\HS}{{\sf HS}}
\newtheorem{lm}[thm]{Lemma}
\newtheorem{prp}[thm]{Proposition}

\newtheorem{df}[thm]{Definition}

\newcommand{\jon}{J\'onsson\ }
\newcommand{\proj}{{\rm proj}}

\newcommand{\inv}{{\rm Inv}}

\def\doi{3 (2:6) 2007}
\lmcsheading%
{\doi}
{1--20}
{}
{}
{Oct.~18, 2006}
{Jun.~\phantom{0}8, 2007}
{}   

\begin{document}

\title{On tractability and congruence distributivity\rsuper*}

\author[E.~Kiss]{Emil Kiss\rsuper a}   %required
\address{{\lsuper a}Department of Algebra and Number Theory,
E{\"o}tv{\"o}s University, 1117 Budapest, P{\'a}zm{\'a}ny P{\'e}ter
s{\'e}t{\'a}ny 1/c,
Hungary} %required
\email{ewkiss@cs.elte.hu}  %optional
%\thanks{thanks 1, optional.}   %optional

\author[M.~Valeriote]{Matthew Valeriote\rsuper b} %optional
\address{{\lsuper b}Department of Mathematics and Statistics,
McMaster University, Hamilton, Ontario,
L8S 4K1, Canada}    %optional
\email{matt@math.mcmaster.ca}  %optional
%\thanks{thanks 2, optional.}    %optional

%% etc.

%% required for running head on odd and even pages, use suitable
%% abbreviations in case of long titles and many authors:

%% mandatory lists of keywords and classifications:
\keywords{constraint satisfaction problem, tractability,
universal algebra, congruence distributivity}
\subjclass{F.1.3, F.4.1}
\titlecomment{{\lsuper *}An extended abstract of this
paper has appeared in the Proceedings of the Twenty-First Annual
IEEE Symposium on Logic in Computer Science}
%%%%%%%%%%%%%%%%%%%%%%%%%%%%%%%%%%%%%%%%%%%%%%%%%%%%%%%%%%%%%%%%%%%%%%%%%%%

%% the abstract has to PRECEED the command \maketitle:
%% be sure not to issue the \maketitle command twice!

\begin{abstract}
\noindent Constraint languages that arise from finite algebras have recently
been an object of study, especially in connection with the Dichotomy
Conjecture of Feder and Vardi.  An important class of algebras are
those that generate congruence distributive varieties and included
among this class are lattices, and more generally, those algebras
that have near-unanimity term operations. An algebra will generate a
congruence distributive variety if and only if it has a sequence of
ternary term operations, called J\'onsson terms, that satisfy
certain equations.

We prove that constraint languages consisting of relations that are
invariant under a short sequence of J\'onsson terms are tractable by
showing that such languages have bounded relational width.
\end{abstract}

\maketitle

\section{Introduction}

The Constraint Satisfaction Problem (CSP) provides a framework for
expressing  a wide class of combinatorial problems.  Given an
instance of the CSP, the aim is to determine if there is a way to
assign values from a fixed domain to the variables of the instance
so that each of its constraints is satisfied.  While the entire
collection of CSPs forms an {\bf NP}-complete class of problems, a
number of subclasses have been shown to be tractable (i.e., to lie
in {\bf P}).  A major focus of research in this area is to determine
the subclasses of the CSP that are tractable.

One way to define a subclass of the CSP is to restrict the
constraint relations that occur in an instance to a given finite set
of relations over a fixed, finite domain, called a constraint
language. A central problem is to classify the constraint languages
that give rise to tractable subclasses of the CSP.  Currently, all
constraint languages that have been investigated have been shown to
give rise to a subclass of the CSP that is either {\bf NP}-complete
or in {\bf P}.  It is conjectured in \cite{feder-vardi-short} that
this dichotomy holds for all subclasses arising from finite
constraint languages.

In some special cases, the conjectured dichotomy has been verified.
For example, the work of Schaefer \cite{schaefer} and of Bulatov
\cite{bulatov-3-element} establish this over domains of sizes 2 and
3 respectively.  For constraint languages over larger domains a
number of significant results have been obtained \cite{bjk,
bulatov-conservative, dalmau-gmm}.

One method for establishing that the subclass of the CSP associated
with a finite constraint language is tractable is to establish a
type of local consistency property for the instances in the
subclass. In \cite{feder-vardi} Feder and Vardi  introduce a notion
of the width of a constraint language and show that  languages of
bounded width give rise to  tractable subclasses of the CSP.  There
is a natural connection between these subclasses of the CSP and
definability within Datalog.

In work by Jeavons and his co-authors an approach to classifying the
tractable constraint languages via algebraic methods has been
proposed and applied with great success \cite{bjk}.  In essence,
their work allows one to associate a finite algebraic structure to
each constraint language and then to analyze the complexity of the
corresponding subclass of the CSP in purely algebraic terms.

In this paper, we employ the algebraic approach to analyzing
constraint languages and with it are able to identify a new, general
class of tractable constraint languages. These languages arise from
finite algebras that generate congruence distributive varieties, or
equivalently, that have a sequence of special term operations,
called \jon terms, that satisfy certain equations. Theorem
\ref{main-result} establishes the tractability of these languages by
showing that they are of bounded width. Related to our result is the
theorem of Jeavons, Cohen, and Cooper in \cite{jeavons-cohen-cooper}
that establishes the tractability of constraint languages that arise
from another class of finite algebras that generate congruence
distributive varieties.  These algebras are equipped with a special
term operation called a near unanimity operation.  Dalmau
\cite{dalmau-gmm} provides an alternate proof of their result.

\section{Preliminaries}

In this  section we  introduce the necessary terminology and results
on the CSP and from universal algebra that will be needed to prove
the main result (Theorem \ref{main-result}) of this paper.

In the following discussion we will employ standard terminology and
notation when dealing with $n$-tuples and relations over sets. In
particular, if $\vec{a}$ is a tuple over the sequence of domains
$A_i$, $1 \le i \le n$, (i.e., is a member of $\prod_{1 \le i \le
n}A_i$) and $I$ is a subset of $\{1,2, \ldots, n\}$ then
$\proj_I(\vec{a})$ denotes the tuple $( a_i\,:\, i \in I) \in
\prod_{i \in I}A_i$ over the sequence of domains $(A_i\,:\, i \in
I)$ and is called the restriction (or the projection) of $\vec{a}$
to $I$. We extend this projection function to arbitrary relations
over the $A_i$. The $i$th element of the tuple $\vec{a}$ will be
denoted by $\vec{a}(i)$.

For $R$ and $S$ binary relations on a set $A$, we define the
relational product of $R$ and $S$, denoted $R\circ S$, to be the
binary relation consisting of all pairs $(a,b)$ for which there is
some $c$ with $(a,c) \in R$ and $(c,b) \in S$.

\subsection{The Constraint Satisfaction Problem}

\begin{df}\label{csp-def}  An instance of the constraint satisfaction problem is a
triple $P = (V, A, \vr c)$ with
\begin{itemize}
    \item $V$ a non-empty, finite set of variables,
    \item $A$ a non-empty, finite set (or domain),
    \item $\vr c$ a set of constraints $\{C_1, \ldots, C_q\}$
    where each $C_i$ is a pair $(\vec{s}_i, R_i)$ with
    \begin{itemize}
        \item $\vec{s}_i$ a tuple of variables of length $m_i$,
        called the scope of
        $C_i$, and
        \item $R_i$ an $m_i$-ary relation over $A$, called the constraint
        relation of $C_i$.
    \end{itemize}
\end{itemize}
\end{df}

Given an instance $P$ of the CSP we wish to answer the following
question:
\begin{quote} Is there a solution to $P$, i.e., is
there a function $f:V \rightarrow A$ such that for each $i \le q$,
the $m_i$-tuple $f(\vec{s}_i) \in R_i$?
\end{quote}

We say that two instances of the CSP having the same set of
variables and  the same domain are equivalent if they have the same
set of solutions.

 In general, the class of CSPs is {\bf NP}-complete
(see \cite{jeavons-cohen-cooper}), but by restricting the nature of
the constraint relations that are allowed to appear in an instance
of the CSP, it is possible to find natural subclasses of the CSP
that are tractable.

\begin{df}  Let $A$ be a domain and $\Gamma$
a set of finitary relations over $A$. CSP($\Gamma$) denotes the
collection of all instances of the CSP with domain $A$ and with
constraint relations coming from $\Gamma$.  $\Gamma$ is called the
constraint language of the class CSP($\Gamma$).
\end{df}

\begin{df}

Call a finite constraint language $\Gamma$  tractable if the class
of problems CSP($\Gamma$) is tractable (i.e., lies in {\bf P}).  If
$\Gamma$ is infinite and each finite subset $\Gamma'$ of $\Gamma$ is
 tractable then we say that $\Gamma$ is tractable.  If the entire
 class CSP($\Gamma$) is in {\bf P} then we say that $\Gamma$ is
 globally tractable.

$\Gamma$ is said to be {\bf NP}-complete if for some finite subset
$\Gamma'$ of $\Gamma$, the class of problems CSP($\Gamma'$) is {\bf
NP}-complete.
\end{df}

A key problem in this area is to classify the  (globally) tractable
constraint languages.  Note that in this paper we will assume that
${\bf P} \ne {\bf NP}$.  Feder and Vardi \cite{feder-vardi}
conjecture that every finite constraint language is either tractable
or is {\bf NP}-complete.

We will find it convenient to extend the above notions of instances
of the CSP and constraint languages to a multi-sorted setting. This
approach has been used on a number of occasions, in particular in
\cite{bulatov-2semi}.

\begin{df}\label{multi-sort}  A multi-sorted instance of the constraint satisfaction problem is a
pair $P = (\vr a, \vr c)$ where
\begin{itemize}
\item $\vr a = (A_1, A_2, \ldots, A_n)$ is a sequence of finite,
non-empty sets, called the domains of $P$, and
    \item $\vr c$ is a set of constraints $\{C_1, \ldots, C_q\}$
    where each $C_i$ is a pair $(S_i, R_i)$ with
    \begin{itemize}
        \item $S_i$ a non-empty subset of $\{1,2, \ldots, n\}$
        called the scope of
        $C_i$, and
        \item $R_i$ an $|S_i|$-ary relation over $(A_j\,:\, j \in S_i)$, called the constraint
        relation of $C_i$.
    \end{itemize}
\end{itemize}
\end{df}

In this case, a solution to $P$ is an $n$-tuple $\vec{a}$ over the
sequence $(A_i\,:\, 1\le i \le n)$ such that $\proj_{S_j}(\vec{a})
\in R_j$ for each $1 \le j \le q$.  Clearly, each standard instance
of the CSP can be expressed as an equivalent multi-sorted instance.
While the given definition of a multi-sorted instance of the CSP
does not allow for the repetition of variables within the scope of
any constraint, there is a natural extension of
Definition~\ref{csp-def} that allows this.  Note that there is a
very straightforward procedure to transform such an instance to an
equivalent one that conforms to Definition~\ref{multi-sort}.

\begin{df} A relation $R$ over the sets $A_i$, $1\le i \le n$, is
subdirect if for all $1 \le i \le n$, $\proj_{\{i\}}(R) = A_i$. We
call a multi-sorted instance $P$  of the CSP subdirect if each of
its constraint relations is.
\end{df}

In addition to the set of solutions of an instance of the CSP, one
can also consider partial solutions of the instance.

\begin{df}
For $P$ as in Definition \ref{multi-sort} and $I$ a subset of
$\{1,2,
\ldots, n\}$, the set of
 partial solutions of $P$ over $I$, denoted $P_I$, is the set of
solutions of the instance $P' = (\vr a', \vr c')$ where $\vr a' =
(A_i\,:\, i \in I)$ and $\vr c = \{C'_1, \ldots, C'_q\}$ with $C'_j
= (I \cap S_j, \proj_{(I\cap S_j)}(R_j))$ for $1 \le j \le q$.
\end{df}

Clearly if the set of partial solutions of an instance over some
subset of coordinates is empty then the instance has no solutions.

\begin{df} Let $\vr c$ be a finite set (or sequence) of finite, non-empty sets.  A
(multi-sorted) constraint language over $\vr c$ is a collection of
finitary relations over the sets in $\vr c$.  Given a multi-sorted
constraint language $\Gamma$ over $\vr c$, the class CSP($\Gamma$)
consists of all multi-sorted instances of the CSP whose domains come
from $\vr c$ and whose constraint relations come from $\Gamma$.
$\Gamma_{\vr c}$ denotes the set of all finitary relations over the
members of $\vr c$.
\end{df}

In a natural way, the notions of tractability and {\bf
NP}-completeness can be extended to multi-sorted constraint
languages.

\subsection{Algebras}

There are a number of standard sources for the basics of universal
algebra, for example \cite{Bu-Sa} and \cite{alvi}.  The books
\cite{Ho-Mck, clasen-valeriote} provide details on the more
specialized aspects of the subject that we will use in this paper.

\begin{df}  An algebra $\m a $ is a pair $(A, F)$ where $A$ is a
non-empty set and $F$ is a (possibly infinite) collection of
finitary operations on $A$.  The operations in $F$ are called the
basic operations of $\m a$.  A term operation of an algebra $\m a$
is a finitary operation on $A$ that can be obtained by repeated
compositions of the basic operations of $\m a$.
\end{df}

We assume some familiarity with the standard algebraic operations of
taking subalgebras, homomorphic images and cartesian products.  Note
that in order to sensibly take a homomorphic image of an algebra, or
the cartesian product of a set of algebras or to speak of terms and
equations of an algebra we need to have some indexing of the basic
operations of the algebras. Algebras that have the same indexing are
said to be similar (or of the same similarity type).

When necessary, we distinguish between an algebra and its underlying
set, or universe.  A subuniverse of an algebra $(A, F)$ is a subset
of $A$ that is invariant under $F$. Note that we allow empty
subuniverses but not algebras with empty universes.

\newcommand{\V}{\mbox{\sf V}}

\begin{df}  A variety of algebras is a collection of similar
algebras that is closed under the taking of cartesian products,
subalgebras and homomorphic images.  If $\vr k$ is a class of
similar algebras then $\V(\vr k)$ denotes the smallest variety that
contains $\vr k$.
\end{df}

\begin{thm}[Birkhoff]  A class $\vr v$ of similar algebras is a variety if
and only if $\vr v$ can be axiomatized by a set of equations.
\end{thm}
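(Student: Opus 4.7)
The plan is to prove the two directions of the biconditional separately; only the second is substantive.

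For the implication from equational axiomatization to variety, I would verify directly that the truth of a single equation $s \approx t$ is preserved by each of the three class operators H, S, and P. Term operations on a product $\prod_i \m{A}_i$ are computed coordinatewise, so satisfaction of $s \approx t$ in every factor forces satisfaction in the product. Subalgebras are closed under term operations, so term values in a subalgebra agree with those in the ambient algebra; and homomorphisms commute with term operations, so a term equation valid in $\m{A}$ remains valid in any homomorphic image. Hence any class axiomatized by a set of equations is closed under $\HSP$.

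For the converse, fix a class $\vr v$ closed under H, S, P, and let $\Sigma$ be the set of all equations satisfied by every member of $\vr v$. The task is to show that every model of $\Sigma$ lies in $\vr v$. The main tool is the free algebra $\m F_{\vr v}(X)$ for $\vr v$ over a set $X$ of variable symbols, constructed as a quotient of the absolutely free term algebra $\m T(X)$ by the congruence
\[
\theta_X \;=\; \bigcap \{\ker \varphi : \varphi\colon \m T(X) \to \m A \text{ a homomorphism with } \m A \in \vr v\}.
\]
By its very definition, $\m F_{\vr v}(X) := \m T(X)/\theta_X$ embeds diagonally into a product of members of $\vr v$, hence lies in $\mathsf{SP}(\vr v) \subseteq \vr v$. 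I would then check two properties: (i) an equation $s \approx t$ on variables from $X$ holds in $\m F_{\vr v}(X)$ iff it holds throughout $\vr v$; and (ii) any map from $X$ into an algebra $\m B$ that satisfies $\Sigma$ extends uniquely to a homomorphism $\m F_{\vr v}(X) \to \m B$. Given an arbitrary $\m B \models \Sigma$, take $X$ of cardinality $|B|$, choose a surjection $X \to B$, extend to a surjective homomorphism $\m T(X) \to \m B$, and use (ii) to factor through $\m F_{\vr v}(X)$. The resulting surjection exhibits $\m B$ as a homomorphic image of $\m F_{\vr v}(X) \in \vr v$, so $\m B \in \vr v$.

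The principal obstacle, and really the only nonroutine step, is establishing (ii): that if $\m B \models \Sigma$ then the evaluation homomorphism $\m T(X) \to \m B$ identifies every $\theta_X$-pair. This amounts to observing that a pair $(s,t) \in \theta_X$ involves only finitely many variables of $X$ and, under the identification of terms over those variables with formal equations, corresponds to an identity belonging to $\Sigma$; hence $\m B$ is forced to respect it. The argument requires some bookkeeping about the variables occurring in $s$ and $t$ and the substitutions used to pass between the free algebra and arbitrary targets, but no deeper idea is needed beyond the construction of $\m F_{\vr v}(X)$ itself.
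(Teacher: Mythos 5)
The paper does not prove this statement at all: it is Birkhoff's classical HSP theorem, quoted as background (with the standard texts \cite{Bu-Sa} and \cite{alvi} as the implicit sources), so there is no internal argument to measure yours against. Your proposal is the standard textbook proof --- preservation of equations under H, S and P for the easy direction, and for the converse the free algebra $\m F_{\vr v}(X)=\m T(X)/\theta_X$, lying in $\vr v$ via a subalgebra-of-a-product embedding, through which the evaluation map onto any model $\m B$ of $\Sigma$ factors --- and it is correct in its essentials. In a full write-up the only points needing explicit care are the ones you gesture at: $\theta_X$ and the diagonal embedding must be formed over a \emph{set} of kernels (not the proper class of all homomorphisms), you need $|X|\ge |B|$ to obtain a surjection onto $\m B$, and the step from $(s,t)\in\theta_X$ to $s\approx t\in\Sigma$ uses that universes are nonempty, so an assignment of the finitely many variables of $s$ and $t$ extends to all of $X$.
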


It turns out that for a class $\vr k$ of similar algebras, $\V(\vr
k) = \HSP(\vr k)$, i.e., the class of homomorphic images of
subalgebras of cartesian products of members of $\vr k$.

\begin{df}  Let $\m a$ be an algebra.
\begin{enumerate}
\item  An equivalence relation $\theta$ on $A$ is a congruence of
$\m a$ if it is invariant under the basic operations of $\m a$.
\item The congruence lattice of $\m a$, denoted $\con(\m a)$, is the
lattice of all congruences of $\m a$, ordered by inclusion.

\item $0_A$ denotes the congruence relation $\{(a,a): a \in A\}$
and $1_A$ denotes the congruence relation $\{(a,b): a, b \in A\}$,
the smallest and largest congruences of the algebra $\m a$,
respectively.

\item  An algebra $\m a$ is simple if $0_A$ and $1_A$ are its only
congruences.
\end{enumerate}
\end{df}

The congruence lattice of an algebra is a very useful invariant and
the types of congruence lattices that can appear in a variety govern
many properties of the algebras in the variety. One particularly
relevant and important property of congruence lattices is that  of
distributivity.

\begin{df} An algebra $\m a$ is said to be {congruence distributive}
if its congruence lattice satisfies the distributive law for
congruence meet and join. A class of algebras is congruence
distributive if all of its members are.
\end{df}

\begin{df} \label{JonEqu} For $k > 0$, we define $CD(k)$ to be the class
of all algebras $\m a$ that have a sequence of ternary term
operations $p_i(x,y,z)$, $0 \le i \le k$, that satisfies the
identities:
\begin{eqnarray*}
p_0(x,y,z) &=&x\\
p_k(x,y,z) &=&z\\
p_i(x,y,x) &= &x \mbox{ for all $i$}\\
p_i(x,x,y) &=& p_{i+1}(x,x,y) \mbox{ for all $i$ even}\\
p_i(x,y,y) &=& p_{i+1}(x,y,y) \mbox{ for all $i$ odd}
\end{eqnarray*}
\end{df}

A sequence of term operations of an algebra $\m a$ that satisfies
the above equations will be referred to as \jon terms of $\m a$. The
following celebrated theorem of \jon relates congruence
distributivity to the existence of \jon terms.

\begin{thm}[J\'onsson]  An algebra $\m a$ generates a congruence
distributive variety if and only if there is some $k > 0$ such that
$\m a$ is in $CD(k)$. In this case, all algebras in $\V(\m a)$ lie
in $CD(k)$.
\end{thm}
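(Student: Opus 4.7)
The plan is to prove both implications, handling the easier ``terms imply congruence distributivity'' direction via the classical zigzag computation, and the harder converse by extracting J\'onsson terms from the free algebra on three generators.

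For the direction $\m a \in CD(k) \Rightarrow \V(\m a)$ is congruence distributive, I first observe that by Birkhoff's theorem the J\'onsson identities of Definition~\ref{JonEqu} lift from $\m a$ to every member of $\V(\m a) = \HSP(\m a)$, reducing the task to showing that any single algebra $\m b$ with J\'onsson terms is congruence distributive. Given congruences $\alpha, \beta, \gamma$ on $\m b$ and a pair $(a, c) \in \alpha \cap (\beta \vee \gamma)$, I unfold the join to a path $a = d_0, d_1, \ldots, d_n = c$ alternating between $\beta$ and $\gamma$ and form the array $e_{i,j} = p_i(a, d_j, c)$. The identity $p_i(x, y, x) = x$ combined with $(a, c) \in \alpha$ shows every entry is $\alpha$-related to $a$, so each horizontal step $(e_{i,j}, e_{i, j+1})$ lies in $\alpha \cap \beta$ or $\alpha \cap \gamma$. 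The identities for even $i$ (respectively odd $i$) collapse the vertical transitions $e_{i,0} = e_{i+1, 0}$ (resp.\ $e_{i, n} = e_{i+1, n}$). A zigzag through the array -- rightward along a row, jump up one row at the right endpoint, leftward, jump up at the left endpoint, and so on -- produces a path from $a$ to $c$ lying entirely in $(\alpha \cap \beta) \cup (\alpha \cap \gamma)$, giving $\alpha \cap (\beta \vee \gamma) \subseteq (\alpha \cap \beta) \vee (\alpha \cap \gamma)$.

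For the converse, I take the free algebra $\m F = F_{\V(\m a)}(x, y, z) \in \V(\m a)$ on three generators and set $\alpha = \cg^{\m F}(x, z)$, $\beta = \cg^{\m F}(x, y)$, $\gamma = \cg^{\m F}(y, z)$. Since $(x, y) \in \beta$ and $(y, z) \in \gamma$ force $(x, z) \in \beta \vee \gamma$, we have $(x, z) \in \alpha \cap (\beta \vee \gamma)$; distributivity of $\con(\m F)$ then produces a chain $x = u_0, u_1, \ldots, u_k = z$ whose consecutive pairs alternate between $\alpha \cap \beta$ and $\alpha \cap \gamma$. After padding with trivial repetitions if needed, I arrange for $(u_i, u_{i+1}) \in \alpha \cap \beta$ at even $i$ and $\in \alpha \cap \gamma$ at odd $i$. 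Because $\m F$ is free on $\{x, y, z\}$, each $u_i$ equals $p_i(x, y, z)$ for a ternary term $p_i$. The three natural surjections $\m F \to F_{\V(\m a)}(x, y)$ obtained by identifying $x$ with $y$, $x$ with $z$, and $y$ with $z$ kill $\beta$, $\alpha$, $\gamma$ respectively, and equations in the free algebra on two generators are precisely identities of $\V(\m a)$. Pushing the chain through these surjections yields $p_i(x, x, y) = p_{i+1}(x, x, y)$ for $i$ even, $p_i(x, y, y) = p_{i+1}(x, y, y)$ for $i$ odd, and (using that $(u_i, x) \in \alpha$ by transitivity along the chain) $p_i(x, y, x) = x$ for all $i$; the boundary conditions $p_0 = x$, $p_k = z$ are immediate from $u_0 = x$, $u_k = z$. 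Since these are identities of the variety, every algebra in $\V(\m a)$ lies in $CD(k)$.

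The main delicacy will be the parity bookkeeping in the converse: the chain supplied by distributivity may begin with a $\gamma$-step or contain consecutive equal elements, so some care is needed when inserting trivial pairs $u_i = u_i$ -- which lie in every congruence and so may be assigned to either $\alpha \cap \beta$ or $\alpha \cap \gamma$ -- to force exactly the alternation prescribed by Definition~\ref{JonEqu}. Beyond this bookkeeping, the proof uses only Birkhoff's theorem, the existence of free algebras inside a variety, and routine manipulation of generated congruences.
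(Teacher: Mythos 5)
Your proof is correct: both directions are the classical argument (the zigzag computation with the array $p_i(a,d_j,c)$ for one direction, and extracting the terms $p_i$ from a chain witnessing $(x,z)\in(\alpha\cap\beta)\vee(\alpha\cap\gamma)$ in the free algebra on three generators for the other), and your handling of the three collapsing homomorphisms and the padding needed to force the alternation is exactly the standard bookkeeping. Note that the paper itself offers no proof of this statement --- it is quoted as J\'onsson's celebrated theorem and used as a black box --- so there is nothing to diverge from; your argument is the textbook proof one would find in the cited universal algebra references.
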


\begin{df}  For $k > 1$, define $\vr v_k$ to be the variety  of all algebras
that have as basic operations a sequence of $k+1$ ternary operations
$p_i(x,y,z)$, for $0 \le i \le k$, that satisfy the equations from
Definition~\ref{JonEqu}.
\end{df}

Note that an algebra is in $CD(1)$ if and only if it has size 1 and
is in $CD(2)$ if and only if it has a majority term operation (i.e.,
a term operation $m(x,y,z)$ that satisfies the equations $m(x,x,y) =
m(x,y,x) = m(y,x,x) = x$).

Some of the main results and conjectures dealing with the CSP can be
expressed in terms of Tame Congruence Theory, a deep theory of the
local structure of finite algebras developed by Hobby and McKenzie.
Details of this theory may be found in \cite{Ho-Mck} or
\cite{clasen-valeriote}. The connection between the CSP and Tame
Congruence Theory was made by Bulatov, Jeavons, and Krokhin
\cite{bjk} and we will touch on it in the next subsection. In this
paper we will only introduce some of the basic terminology of the
theory and will omit most details.

In Tame Congruence Theory, five local types of behaviour of  finite
algebras are identified and studied.  The five types are, in order:
\begin{quote}
\begin{enumerate}
\item the unary type,
\item the affine or vector-space type,
\item the 2 element Boolean type,
\item the 2 element lattice type,
\item the 2 element semi-lattice type.
\end{enumerate}
\end{quote}
We say that an algebra $\m a$ omits a particular type  if, locally,
the corresponding type of behaviour does not occur in $\m a$.  A
class of algebras $\vr c$ is said to omit a particular type  if all
finite members of $\vr c$ omit that type.

In \cite{Ho-Mck}, chapter 9,  characterizations of finite algebras
that generate varieties that omit the unary type or both the unary
and affine type are given.  The characterizations are similar to
that given by \jon of the congruence distributive varieties.  It
easily follows from the characterizations that if $\m a$ is a finite
algebra that generates a congruence distributive variety then the
variety omits both the unary and affine types.

To close this subsection we note a special property of the term
operations of the algebras in $\vr v_k$ for all $k > 1$.
\begin{df}
An $n$-ary operation $f(x_1, \ldots, x_n)$ on a set $A$ is
idempotent if for all $a \in A$, $f(a,a, \ldots, a) = a$ .  An
algebra is idempotent if all of its term operations are idempotent.
\end{df}

Note that idempotency is hereditary in the sense that if a function
is the composition of some idempotent operations then it too is
idempotent.  In another sense, if $\m a$ is idempotent then all
algebras in $\V(\m a)$ are idempotent, since this condition can be
described equationally.  Finally, note that \jon terms are
idempotent and so all algebras in $\vr v_k$ for $k > 1$ are
idempotent.

\subsection{Algebras and the CSP}

The natural duality between sets of relations (constraint languages)
over a set $A$ and sets of operations (algebras) on $A$ has been
studied by algebraists for some time.  Jeavons and his co-authors
\cite{jeavons} have shown how this link between constraint languages
and algebras can be used to transfer questions about tractability
into equivalent questions about algebras.  In this subsection we
present a concise overview of this connection.

\begin{df}  Let $A$ be a non-empty set.
\begin{enumerate}
\item
Let $R$ be an $n$-ary relation over  $A$ and $f(\bar x)$ an $m$-ary
function over $A$ for some $n$, $m \ge 0$. We say that $R$ is
invariant under $f$ and that $f$ is a polymorphism of $R$ if for all
$\vec{a}_i \in R$, for $1 \le i \le m$, the $n$-tuple $f(\vec{a}_1,
\ldots, \vec{a}_m)$,  whose $i$-th coordinate is equal to
$f(\vec{a}_1(i), \ldots, \vec{a}_m(i))$, belongs to $ R$.

\item  For $\Gamma$ a set of relations over $A$, $\pol(\Gamma)$
denotes the set of functions on $A$ that are polymorphisms of all
the relations in $\Gamma$.

\item For $F$ a set of finitary operations on $A$, $\inv(F)$ denotes
the set of all finitary relations on $A$ that are invariant under
all operations in $F$.

\item For $\Gamma$ a constraint language over $A$, $\langle
\Gamma \rangle$ denotes $\inv(\pol(\Gamma))$ and $\m a_\Gamma$
denotes the algebra $(A, \pol(\Gamma))$.

\item For $\m a = (A, F)$, an algebra over $A$, $\Gamma_{\sm a}$
denotes the constraint language $\inv(F)$.

\item We call a finite algebra $\m a$ tractable ({\bf NP}-complete) if the
constraint language $\Gamma_{\sm a}$ is.
\end{enumerate}
\end{df}

Note that if $\m a$ is an algebra, then $\inv(\m a)$ coincides with
the set of all subuniverses of finite cartesian powers of $\m a$.
Sets of relations of the form $\inv(\Gamma)$ for a set of relations
$\Gamma$ are known as relational clones. Equivalently, a set of
relations $\Lambda$ over a finite set $A$ is a relational clone if
and only if it is closed under definition by primitive positive
formulas (or conjunctive queries).

\begin{thm}(\cite{jeavons})\label{relclone}
Let $\Gamma$ be a constraint language on a finite set.  If $\Gamma$
is tractable then so is $\langle\Gamma \rangle$.  If $\langle\Gamma
\rangle$ is {\bf NP}-complete then so is $\Gamma$.
\end{thm}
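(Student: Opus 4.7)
The plan is to derive both halves of the theorem from a single uniform polynomial-time reduction. The key observation is that, for a constraint language $\Gamma$ over a finite set $A$, the relational clone $\langle \Gamma \rangle = \inv(\pol(\Gamma))$ coincides exactly with the set of relations definable from $\Gamma$ (together with equality) by primitive positive formulas, i.e., conjunctive queries with existential quantifiers. This is the classical Geiger--Bodnarchuk--Kaluzhnin--Krasner theorem, which is exactly the fact flagged in the paragraph preceding the statement. I would take this as the main technical input rather than reprove it.

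Given any finite subset $\Delta \subseteq \langle \Gamma \rangle$, I would fix for each $R \in \Delta$ a primitive positive definition $\varphi_R(\bar x) \equiv \exists \bar y \, \psi_R(\bar x, \bar y)$, where $\psi_R$ is a conjunction of finitely many atomic formulas $R'(\bar t)$ with $R' \in \Gamma$, plus possibly equalities. Let $\Gamma' \subseteq \Gamma$ denote the (finite) union over $R \in \Delta$ of the relations occurring in $\psi_R$. The reduction now takes an instance $P$ of CSP($\Delta$) and replaces each constraint $(\vec s, R)$ by the set of constraints listed in $\psi_R$, plugging the entries of $\vec s$ into $\bar x$ and introducing a fresh copy of the existentially quantified variables $\bar y$ for this occurrence. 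Equalities are discharged by identifying the corresponding variables. The output is an instance $P'$ of CSP($\Gamma'$) of size linear in the size of $P$, since each $\psi_R$ has fixed size. A routine check shows $P$ and $P'$ have the same answer: any solution of $P'$ restricts to a solution of $P$, and any solution of $P$ extends to $P'$ by choosing, for each constraint occurrence, the existential witnesses guaranteed by $\varphi_R$.

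Both claims of the theorem then follow. For the first, if $\Gamma$ is tractable then CSP($\Gamma'$) lies in \textbf{P} for every finite $\Gamma' \subseteq \Gamma$, and the reduction above puts CSP($\Delta$) in \textbf{P} for every finite $\Delta \subseteq \langle \Gamma \rangle$; hence $\langle \Gamma \rangle$ is tractable. For the second, if $\langle \Gamma \rangle$ is \textbf{NP}-complete then CSP($\Delta$) is \textbf{NP}-complete for some finite $\Delta \subseteq \langle \Gamma \rangle$; the same reduction makes CSP($\Gamma'$) \textbf{NP}-hard for the corresponding finite $\Gamma' \subseteq \Gamma$, and since it lies in \textbf{NP} it is \textbf{NP}-complete, so $\Gamma$ is \textbf{NP}-complete. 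The only substantive ingredient is the Galois connection providing pp-definability; everything else is bookkeeping, and the one potential pitfall (which is the point I would watch most carefully) is to introduce \emph{fresh} copies of the existentially quantified variables for each occurrence of each constraint, rather than reusing them globally, so that distinct occurrences of the same $R$ can pick independent witnesses.
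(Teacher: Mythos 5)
Your proof is correct and is essentially the standard argument behind the cited result: the paper itself gives no proof of Theorem~\ref{relclone} (it is quoted from the Jeavons reference), and your reduction of CSP($\Delta$) to CSP($\Gamma'$) via fixed primitive positive definitions, with fresh existential witnesses introduced for each constraint occurrence and equalities discharged by identifying variables, is exactly how that theorem is established. The only substantive input, that $\langle\Gamma\rangle$ coincides with the relations primitive-positive definable from $\Gamma$, is precisely the Galois-connection fact the paper records immediately before the statement, so nothing further is needed.
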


In algebraic terms, Theorem \ref{relclone} states that a constraint
language $\Gamma$ is tractable (or {\bf NP}-complete) if and only if
the algebra $\m a_\Gamma$ is.  So, the problem of characterizing the
tractable constraint languages can be reduced to the problem of
characterizing the tractable finite algebras.  In a further step,
Bulatov, Jeavons and Krokhin \cite{bjk} provide a  reduction down to
idempotent algebras.  For this class of algebras, they propose the
following characterization of tractability.

\begin{conj} \label{tract-conj} Let $\m a$ be a finite idempotent algebra.  Then $\m
a$ is tractable if and only if the variety $\V(\m a)$ omits the
unary type.
\end{conj}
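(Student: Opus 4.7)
The plan splits along the biconditional. For the forward direction (tractability implies omitting the unary type), I would carry out a standard hardness reduction via tame congruence theory: using the type-labelling machinery of \cite{Ho-Mck}, locate inside some finite power of $\m a$ an $\langle \alpha, \beta \rangle$-minimal set on which a prime quotient has unary type. Such a set carries an essentially $G$-set structure that lifts back to the constraint language $\Gamma_{\sm a}$, and via Theorem \ref{relclone} one can then pp-define a relation rich enough to encode a known {\bf NP}-complete problem such as graph $3$-colouring. This direction of the conjecture was essentially in hand at the time of writing.

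For the reverse direction, the natural strategy is to stratify the class of finite idempotent algebras omitting the unary type according to which further tame congruence types $\V(\m a)$ omits, and to supply a tailored polynomial-time algorithm for each stratum. After invoking the Bulatov--Jeavons--Krokhin reduction to the idempotent case, one has the inclusion chain from $CD(k)$ (congruence distributive), through the algebras whose varieties omit both the unary and affine types (``meet-semidistributive''), up to the full class of algebras whose varieties omit the unary type. For the innermost layer, Theorem \ref{main-result} of the present paper supplies a Datalog-style local-consistency algorithm driven by \jon terms; for the meet-semidistributive middle layer one would seek to extend this bounded-width approach using weak near-unanimity terms; and for the outermost layer one must additionally handle affine behaviour with a linear-algebraic solver of Gaussian-elimination flavour.

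The main obstacle is uniformly combining these paradigms. Local consistency alone is provably insufficient, since finite modules omit the unary type and are tractable by Gaussian elimination yet have unbounded relational width, so the \jon-term technique of this paper cannot extend by itself. A proof of the conjecture will therefore require either decomposing an arbitrary $\m a$ omitting the unary type into compatibly interacting ``affine-like'' and ``semidistributive-like'' components whose individual solvers can be glued together, or isolating a genuinely new algorithmic primitive that subsumes both the consistency and linear-algebraic approaches. Designing such a decomposition or primitive uniformly across all finite idempotent algebras omitting the unary type is the heart of the difficulty, and is the reason the present paper restricts attention to the congruence distributive case.
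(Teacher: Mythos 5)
The statement you were asked about is not a theorem of the paper at all: it is Conjecture \ref{tract-conj}, the tractability conjecture of Bulatov, Jeavons, and Krokhin, which the paper explicitly presents as open and for which it offers no proof. The paper only records that one direction is known --- namely, that if $\V(\m a)$ admits the unary type then $\m a$ is {\bf NP}-complete (cited to \cite{bjk}, and equivalent, under the standing assumption ${\bf P} \ne {\bf NP}$, to ``tractable implies omits the unary type'') --- and then proves Theorem \ref{main-result} as partial evidence for the much weaker special case of short \jon\ term sequences. So there is no ``paper proof'' to match your attempt against, and your attempt should be judged on its own terms as a purported proof of the conjecture.

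Judged that way, it has a genuine and decisive gap: the hard direction is never proved. Your ``forward'' direction is a reasonable sketch of the known \cite{bjk} hardness argument (unary-type minimal sets, essentially a $G$-set, pp-interpretation of an {\bf NP}-complete problem via Theorem \ref{relclone}), and that part could be fleshed out. But for the converse --- that every finite idempotent $\m a$ whose variety omits the unary type is tractable --- you offer only a stratification of the problem (congruence distributive, then omitting unary and affine types, then the general case) together with the names of algorithmic paradigms one would hope to use, and you yourself concede in your final paragraph that no one knows how to combine local consistency with Gaussian-elimination-style solvers uniformly. A correct observation that local consistency cannot suffice (finite modules have unbounded relational width yet are tractable) is evidence of the difficulty, not a step toward resolving it. Since the entire content of the conjecture beyond what \cite{bjk} already established lies in exactly the part you leave as ``the heart of the difficulty,'' the proposal is a research program, not a proof, and cannot be accepted as one; this is consistent with the fact that the paper itself restricts to $CD(3)$ precisely because the general statement was out of reach.
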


They show that when this condition fails, the algebra is {\bf
NP}-complete \cite{bjk}.  They also show that if $\m a$ is a finite,
idempotent algebra then $\V(\m a)$ omits the unary type if and only
if the class $\HS(\m a)$ does.  This conjecture has been verified
for a number of large classes of algebras.  For example, results of
Schaefer \cite{schaefer} and Bulatov \cite{bulatov-3-element}
provide a verification for algebras whose universes have size 2 and
3 respectively.

As noted in the introduction, one approach to proving the
tractability of a constraint language $\Gamma$ is to apply a  notion
of local consistency to the instances in CSP($\Gamma$) to determine
if the instances have  solutions.  We present a notion of width,
called  relational width, developed by Bulatov and Jeavons
\cite{bulatov-jeavons} that, for finite constraint languages, is
closely related to the notion of  width defined by Feder and Vardi
(see \cite{larose-width-notes, larose-zadori-width}).  In this paper
we will closely follow the presentation of relational width found in
\cite{bulatov-2semi}.

\begin{df} Let $\vr a = (A_1, \ldots, A_n)$ be a sequence of finite, non-empty sets, let
 $P= (\vr a, \vr c)$ be an instance of the CSP and let
$k > 0$.  We say that $P$ is $k$-minimal if:
\begin{enumerate}
\item  For each subset $I$ of $\{1, 2, \ldots, n\}$ of size at most $k$,
 there is some
constraint $(S, R)$ in $\vr c$ such that $I \subseteq S$, and
\item If $(S_1, R_1)$ and $(S_2, R_2)$ are constraints in $\vr c$
and $I \subseteq S_1 \cap S_2$ has size at most $k$ then
$\proj_I(R_1) = \proj_I(R_2)$.
\end{enumerate}
\end{df}

It is  not hard to show that the second condition of this definition
is equivalent to having  the set of partial solutions $P_I$ of $P$
equal to  $\proj_I(R_i)$ for all subsets $I$ of size at most $k$ and
all $i$ with $I \subseteq S_i$.

\begin{prp}\label{min-alg}
Let $\Gamma$ be a constraint language and $k > 0$.
There is a
polynomial time algorithm (the $k$-minimality algorithm)
 that converts a
given instance $P$ from CSP($\Gamma$)
into an equivalent $k$-minimal instance  $P'$ %(i.e., that has the same solution set)
from CSP($\langle\Gamma\rangle$).  In fact, if the arities of the
constraint relations of $P$ are bounded by an integer $m\ge k$ then
the arities of the constraint relations of $P'$ are also bounded by
$m$.
\end{prp}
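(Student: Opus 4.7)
The plan is to build the $k$-minimality algorithm in two phases. The first phase enforces condition (1) of $k$-minimality: for every subset $I \subseteq \{1,\ldots,n\}$ of size at most $k$, if $I$ is not already contained in the scope of some constraint of $P$, append the new constraint $(I, \prod_{i \in I} A_i)$. The number of such $I$ is $O(n^k)$, so only polynomially many constraints are added. The full product relation is preserved by every operation on $\prod_{i \in I} A_i$, hence lies in $\langle\Gamma\rangle$; each added scope has arity at most $k \le m$; and since the relation imposes no restriction, the set of solutions is unchanged.

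The second phase enforces condition (2) by iterative tightening. Repeatedly select constraints $(S_1,R_1)$ and $(S_2,R_2)$ in the current instance together with a set $I \subseteq S_1 \cap S_2$ of size at most $k$ such that $\proj_I(R_1) \ne \proj_I(R_2)$; set $T = \proj_I(R_1) \cap \proj_I(R_2)$; and replace each $R_j$ by $\{\vec{a} \in R_j : \proj_I(\vec{a}) \in T\}$. Any tuple of $R_j$ removed in this step cannot extend to a solution of $P$ (a solution passing through $\vec{a}$ would force $\proj_I(\vec{a}) \in \proj_I(R_{3-j})$), so the solution set is preserved throughout. Each new $R_j$ is primitive-positive definable from $R_1$ and $R_2$, so it remains in the relational clone $\langle\Gamma\rangle$. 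The algorithm halts when no tightening step applies, at which point condition (2) holds.

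For the complexity bound I would observe that each tightening step strictly decreases the total $\sum_j |R_j|$, which is bounded above by $\sum_j |A|^{|S_j|}$; since the arities are capped by the fixed constant $m$, this sum is polynomial in the size of $P$ and in $n$. Each individual step only requires a projection and an intersection of bounded-arity relations, which costs polynomial time. The main subtlety is to check simultaneously that (a) every modified relation still lies in $\langle\Gamma\rangle$, (b) scope arities never exceed $m$, and (c) the solution set is preserved; the two-phase design is arranged precisely so that this check is essentially immediate, because Phase~1 introduces only full-product relations (universally invariant, arity $\le k$, vacuously satisfied) and Phase~2 applies only pp-definable shrinking operations that never alter a scope.
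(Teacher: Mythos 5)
Your construction is correct and is essentially the standard argument: the paper itself gives no proof of this proposition, deferring to the discussion in Section 3.1 of Bulatov's paper on 2-semilattices, and that discussion is exactly your two-phase scheme (add full-product constraints on all at-most-$k$-element scopes, then tighten by intersecting projections, with pp-definability keeping everything inside $\langle\Gamma\rangle$ and the shrinking potential $\sum_j|R_j|$ giving the polynomial bound). The only point worth polishing is the complexity remark: termination in polynomially many steps follows already because $\sum_j|R_j|$ is bounded by the size of the explicitly listed input relations, so no appeal to the arity bound $m$ is needed there; $m$ is only needed for the final claim about the arities of the constraints of $P'$.
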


\begin{proof}  See the discussion in Section 3.1 of
\cite{bulatov-2semi}.
\end{proof}

\begin{df}  Let $\Gamma$ be a constraint language and $k > 0$.  We
say that $\Gamma$ has relational width $k$ if for every instance $P$
from CSP($\Gamma$), $P$ has a solution if and only if the constraint
relations of  $P'$, the equivalent $k$-minimal instance produced by
the $k$-minimality algorithm, are all non-empty.
\end{df}

\begin{prp}\label{width-prop}
Let $\Gamma$ be a constraint language and $k > 0$.
\begin{enumerate}
\item If an instance $P$ of the CSP has a solution
then the constraint relations of all equivalent instances are
non-empty.

\item  If $\Gamma$ has relational width $k$ and $\Delta \subseteq
\Gamma$ then $\Delta$ also has relational width $k$.

\item If $\Gamma$ has relational width $k$ then every $k$-minimal
instance $P$ from CSP($\Gamma$) whose constraint relations are
non-empty has a solution.

\item If $\Gamma$  is of finite relational width then it is globally
tractable.

\item If every $k$-minimal instance from CSP($\langle\Gamma\rangle$)
whose constraint relations are non-empty has a solution then
$\Gamma$ has relational width $k$ and hence is globally tractable.

\item If $\Gamma$ is finite and $m \ge k$ is an upper bound on the
arities of the relations in $\Gamma$ then $\Gamma$ has relational
width $k$ if every $k$-minimal instance from
CSP($\langle\Gamma\rangle$) whose constraint relations are non-empty
and have arity $\le m$ has a solution.

\end{enumerate}

\end{prp}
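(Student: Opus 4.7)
The plan is to dispatch the six parts in order, since all of them are direct consequences of the definitions plus Proposition \ref{min-alg}. Part (1) is immediate: any solution $f$ of $P$ is a solution of each equivalent instance $Q$, so the tuple $f(\vec{s})$ witnesses that the constraint with scope $\vec{s}$ in $Q$ is non-empty. Part (2) needs only the observation that any instance in CSP($\Delta$) is also an instance in CSP($\Gamma$), and that the $k$-minimality algorithm is purely syntactic, producing the same output and the same equivalent $k$-minimal instance regardless of which of the two languages we regard the input as belonging to; hence the defining condition of relational width $k$ for $\Gamma$ transfers verbatim to $\Delta$.

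For part (3), the decisive point is that the $k$-minimality algorithm is idempotent on $k$-minimal inputs: the two conditions of the $k$-minimality definition are precisely the closure conditions enforced by the algorithm, so a $k$-minimal instance is a fixed point. Therefore if $P$ is $k$-minimal with all constraint relations non-empty, the equivalent $k$-minimal $P'$ output by the algorithm coincides with $P$ (or at least has the same constraint relations), and the width-$k$ hypothesis on $\Gamma$ supplies a solution.

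Part (4) follows from Proposition \ref{min-alg}: run the polynomial-time $k$-minimality algorithm on the input and check whether the resulting constraint relations are all non-empty; by the definition of width $k$, this decides the existence of a solution. Part (5) combines (4) with (1): given $P\in$~CSP($\Gamma$), the algorithm produces an equivalent $k$-minimal $P'\in$~CSP($\langle\Gamma\rangle$); if the constraint relations of $P'$ are all non-empty, the hypothesis of (5) yields a solution of $P'$ and hence of $P$, while if some constraint of $P'$ is empty, part (1) applied to the equivalent pair $(P,P')$ shows that $P$ has no solution. Part (6) adds in the arity-preserving statement of Proposition \ref{min-alg}: if all constraint relations of $P$ have arity at most $m\ge k$, the same is true of $P'$, so the hypothesis of (6) is applicable to $P'$ and the argument of (5) goes through unchanged.

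No step is substantive; the only point that requires genuine thought is the idempotence claim underlying part (3), which must be extracted from the description of the $k$-minimality algorithm referenced in Proposition \ref{min-alg} (and from the equivalent reformulation of the second $k$-minimality condition as $P_I = \proj_I(R_i)$ noted just before that proposition).
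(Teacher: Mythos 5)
Your proposal is correct and follows essentially the same route as the paper: parts (4)--(6) are derived from Proposition~\ref{min-alg} exactly as in the paper's proof (polynomial-time conversion, membership of the new constraint relations in $\langle\Gamma\rangle$, and the arity bound $\max(k,m)$), while parts (1)--(3), which the paper treats as immediate and does not argue explicitly, are filled in with the obvious arguments. Your one substantive addition, the observation that a $k$-minimal instance is a fixed point of the $k$-minimality algorithm (needed for part (3)), is precisely the implicit justification the paper relies on via the algorithm described in the cited reference, so there is no real divergence.
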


\begin{proof}
Statement (4) follows from Proposition \ref{min-alg}, since if
$\Gamma$ has relational width $k$ and $P$ is an instance from
CSP($\Gamma$) then in order to determine if $P$ has a solution, it
suffices to test if $P'$, the equivalent $k$-minimal instance
produced by the $k$-minimality algorithm, has non-empty constraint
relations. Statements (5) and (6)  also follows from Proposition
\ref{min-alg} since the constraint relations of $P'$ belong
to $\langle \Gamma \rangle$ and their arities are no bigger than the
maximum of $k$ and the arities of the constraint relations of $P$.
\end{proof}

In the case where $\Gamma$  happens to be a relational clone (i.e.,
$\Gamma = \langle \Gamma \rangle$) it follows from statements (3)
and (5) of the previous proposition that $\Gamma$ has relational
width $k$ if and only if every $k$-minimal instance of CSP($\Gamma$)
whose constraint relations are all non-empty has a solution.  For
the most part, we are interested in this type of constraint language
in this paper.

We note that in \cite{larose-width-notes,larose-zadori-width} it is
shown that a finite constraint language has bounded relational width
if and only if it has bounded width in the sense of Feder-Vardi.
 The following conjecture is similar to Conjecture
\ref{tract-conj} and was proposed by Larose and Z\'adori
\cite{larose-zadori-width} for constraint languages of bounded
width.

\begin{conj}\label{width-conj}  Let $\m a$ be a finite idempotent
algebra.  Then $\m a$ is of bounded width if and only if $\V(\m a)$
omits the unary and affine types.
\end{conj}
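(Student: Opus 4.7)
The plan is to establish the two directions separately, devoting most of the effort to sufficiency. For the necessity direction I would argue by contrapositive. Suppose first that $\V(\m a)$ admits the unary type. By the reduction theorem of Bulatov, Jeavons and Krokhin \cite{bjk} cited after Conjecture \ref{tract-conj}, $\m a$ is already {\bf NP}-complete, and hence (under the working assumption ${\bf P}\neq{\bf NP}$) cannot have bounded relational width; an unconditional version would use the tame-congruence-theoretic characterization in chapter 9 of \cite{Ho-Mck} to produce, for every $k$, an explicit $k$-minimal instance of $\mathrm{CSP}(\Gamma_{\sm a})$ without a solution. Suppose instead that $\V(\m a)$ admits the affine type. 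Then some subalgebra of a finite power of $\m a$ has a quotient term-equivalent to a module over a finite prime field, so $\langle \Gamma_{\sm a} \rangle$ encodes arbitrary systems of linear equations over that field. Classical expansion arguments furnish, for each $k$, sparse unsatisfiable linear systems every $k$-local projection of which is satisfiable, contradicting bounded relational width.

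For the sufficiency direction I would proceed in three stages. The first stage is to derive a term-theoretic certificate for a finite idempotent algebra whose variety omits both the unary and the affine types, analogous to the \jon characterization of congruence distributivity. Concretely, I would try to produce a finite sequence of idempotent term operations of $\m a$ — perhaps of weak near-unanimity or cyclic shape — satisfying a finite list of identities that forces $\V(\m a)$ to omit both types everywhere. The second stage adapts the local-to-global argument underlying Theorem \ref{main-result} of this paper, substituting the new sequence of terms for the \jon terms. Given a subdirect $k$-minimal instance of $\mathrm{CSP}(\inv(\m a))$ with non-empty constraint relations, the strategy is to build a global solution by iteratively merging compatible partial solutions along paths in the variable graph, the merge step applying the new terms coordinatewise to partial solutions that agree on sufficiently large overlaps.

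The third stage is an induction on $|\m a|$ showing that the consistency level $k$ depends only on the size and similarity type of $\m a$: subdirectly irreducible algebras are handled by the propagation lemma of the second stage, and the inductive step combines partial solutions across a subdirect representation. The main obstacle, and the reason the full conjecture has resisted proof, lies in the first stage. Unlike congruence distributivity — which by \jon's theorem is equivalent to the existence of a finite sequence of terms satisfying a fixed list of identities — no such finite syntactic certificate is presently known for the condition that $\V(\m a)$ omits both the unary and the affine types. Producing a finite, effectively verifiable system of identities whose solvability in $\m a$ is equivalent to this tame-congruence-theoretic hypothesis is the central challenge; once such terms are available, the local-to-global argument would follow the blueprint of Theorem \ref{main-result}.
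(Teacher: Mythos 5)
There is a genuine gap, and you have in fact named it yourself: your ``first stage'' --- producing a finite sequence of idempotent terms satisfying a fixed list of identities that is equivalent to $\V(\m a)$ omitting the unary and affine types, and then pushing the local-to-global argument through with those terms --- is precisely the open content of the statement. Note that the statement you were asked to prove is Conjecture~\ref{width-conj}; the paper does not prove it and does not claim to. It only (a) cites \cite{larose-zadori-width} for the necessity direction (if $\V(\m a)$ admits the unary or affine type then $\m a$ is not of bounded width, proved there unconditionally, so your conditional detour through {\bf NP}-completeness of the unary-type case is both unnecessary and insufficient as stated, since bounded width is a combinatorial property that does not follow from ${\bf P}\neq{\bf NP}$ considerations alone), and (b) offers Theorem~\ref{main-result} --- the $CD(3)$ case --- as partial evidence for sufficiency. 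So there is no proof in the paper for your proposal to be compared against, and your proposal does not supply one.

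Concretely, the reason your second and third stages cannot simply ``adapt the blueprint'' of Theorem~\ref{main-result} is that every step of that proof leans on the specific J\'onsson identities for $CD(3)$: the multiplication $x\cdot y=p_1(x,y,y)=p_2(x,y,y)$, the J\'onsson ideals it generates, the distance inequality of Lemma~\ref{distance_inequality}, the dichotomy for subdirect products of simple J\'onsson-trivial algebras (Lemma~\ref{connected_simple}), and the two reduction steps (Theorem~\ref{jon-trivial}, Lemma~\ref{pullback}). None of these have known analogues for unspecified ``weak near-unanimity or cyclic'' terms, and the paper's own Problem~1 records that even the extension from $CD(3)$ to $CD(n)$, $n>3$ --- where J\'onsson terms are actually available --- was open at the time of writing. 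Until your first stage is carried out and the ideal/connectivity machinery is rebuilt for the new terms, stages two and three are a program, not a proof; as written, the proposal establishes nothing beyond what the paper already states as known.
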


In \cite{larose-zadori-width} Larose and Z\'adori verify one
direction of this conjecture, namely that if $\V(\m a)$ fails to
omit the unary or affine types then $\m a$ is not of bounded width.
Note that in \cite{bulatov}, Bulatov proposes a  conjecture that is
parallel to \ref{width-conj}.  Larose and the second author have
noted  that, as with the unary type, one need only check in $\HS(\m
a)$ to determine if $\V(\m a)$ omits the unary and affine types when
$\m a$ is finite and idempotent (see Corollary 3.2 of
\cite{valeriote-intprop} for a more general version of this).

The main result of this paper can be regarded as providing some
evidence in support of Conjecture \ref{width-conj}. Theorem
\ref{main-result} establishes that if $\m a$ is a finite member of
$CD(3)$ then any finite constraint language contained in
$\Gamma_{\sm a}$ is of bounded width and hence tractable.

\section{Algebras in $CD(3)$}\label{reductions}

Recall that the variety $\vr v_3$ consists of all algebras $\m a$
having four basic operations $p_i(x,y,z)$, $0 \le i \le 3$ that
satisfy the equations of Definition~\ref{JonEqu}.  Since the
equations dictate that $p_0$ and $p_3$ are projections onto $x$ and
$z$ respectively, they will play no role in the analysis of algebras
in $CD(3)$.

\subsection{J\'onsson ideals}

For $\m a$ an algebra in $\vr v_3$, define $x\cdot y$ to be the
binary term operation $p_1(x,y,y)$ of $\m a$.  Note that the \jon
equations imply that $x \cdot y = p_2(x,y,y)$ as well. This
``multiplication'' will play a crucial role in the proof of the main
theorem of this paper.

\begin{df}
For $X$ a subset of an algebra $\m b \in \vr v_3$  let $J(X)$ be the
smallest subuniverse $Y$ of $\m b$ containing $X$ and satisfying the
following closure property: if $x$ is in $Y$ and $u \in B$ then
$u\cdot x$ is also in $Y$.
\end{df}

We will call $J(X)$ the \jon ideal of $\m b$ generated by $X$.  The
concept of a \jon ideal was developed in \cite{valeriote-intprop}
for any algebra that generates a congruence distributive variety and
was used in that paper to establish some intersection properties of
subalgebras that are related to relational width.

\begin{df}  A finite algebra $\m b \in \vr v_3$ will be called
\jon trivial if it has no proper non-empty \jon ideals.
\end{df}

Note that $\m b$ is \jon trivial if and only if $J(\{b\}) = B$ for
all $b \in B$.  Also note that if $\m b$ is \jon trivial then every
homomorphic image of it is, as well.

We now define a notion of distance in an algebra that will be
applied to \jon trivial algebras to establish some useful features
of the subalgebras of their cartesian products.

\begin{df}
 Let $\m A$ and $\m B$ be  arbitrary similar algebras and $\m S$
 a subdirect subalgebra  of $\m A\times \m B$.
\begin{enumerate}
\item  Let $S_0 = 0_A$
 and $S_1$  be the relation on $A$ defined by:
\[
\mbox{$(a,c)\in S_1 \iff (a,b), (c,b)\in S$ for some $b\in B$.}
\]

\item  For $k > 0$, let $S_{k+1} = S_k \circ S_1$.

\item For $a$, $b \in A$, we write $d(a,b) = k$ if the pair $(a,b)$
is in $S_k$ and not in $S_{k-1}$ and will say that the distance
between $a$ and $b$ relative to $S$ is $k$. If no such $k$ exists,
$d(a,b)$ is said to be undefined.

\item If $d(a,b)$ is defined for all $a$ and $b \in A$ we say that
$\m a$ is connected with respect to $\m s$.
\end{enumerate}
\end{df}

\begin{prp}
Let $\m a$, $\m b$ and $\m s$ be as in the definition.
\begin{enumerate}
\item
 For each $k \ge 0$, the relation $S_k$ is a reflexive, symmetric
 subuniverse
of $\m a^2$.
 \item  If $\m a$ is an idempotent algebra and $c \in A$ then for
 any $k \ge 0$, the set of all elements $a$ with $d(a,c) \le k$ is a
 subuniverse of $\m a$.
\item If $\m a$ is a simple algebra then either $d(a,b)$ is
undefined for all $a \ne b \in A$ (equivalently $S_1 = 0_A$) or $\m
a$ is connected with respect to $\m s$.
\end{enumerate}
\end{prp}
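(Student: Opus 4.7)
All three parts rest on the observation that $S_1$ consists of those pairs $(a,c) \in A^2$ such that $(a,b)$ and $(c,b)$ both lie in $S$ for some common $b \in B$, and that the relational composition of two subuniverses of $\m a^2$ is again a subuniverse. My plan is to establish (1) by induction on $k$, then to obtain (2) by combining idempotency with part (1), and finally to use part (1) together with simplicity to settle (3). I expect (1) to carry the technical weight; (2) and (3) should then be short consequences.

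For (1) I would handle the base cases explicitly: $S_0 = 0_A$ is trivial, and for $S_1$ reflexivity follows from the subdirectness of $\m s$ (for each $a \in A$, pick $b$ with $(a,b) \in S$ and witness $(a,a) \in S_1$), symmetry is immediate from the definition, and closure under the basic operations is checked componentwise using the hypothesis that $S$ is a subuniverse of $\m a \times \m b$. For higher $k$, I would show by induction that $S_k$ equals the $k$-fold relational composition $S_1 \circ S_1 \circ \cdots \circ S_1$; associativity of relational composition then gives $S_k = S_{k-1} \circ S_1 = S_1 \circ S_{k-1}$, from which symmetry of $S_k$ follows by reversing the factors (using symmetry of each factor). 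Closure under operations lifts from $S_1$ because the relational composition of subuniverses of $\m a^2$ is again a subuniverse (apply the operation componentwise to the intermediate witnesses). The chain $S_0 \subseteq S_1 \subseteq S_2 \subseteq \cdots$ drops out from reflexivity of $S_1$.

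Part (2) is then a one-line consequence of (1): $\{a : d(a,c) \le k\} = \{a : (a,c) \in S_k\}$, using that the $S_j$ are nested, so if $a_1, \ldots, a_n$ lie in this set and $f$ is a basic $n$-ary operation of $\m a$, applying $f$ componentwise to the pairs $(a_1,c), \ldots, (a_n,c)$ inside the subuniverse $S_k$ yields $(f(a_1,\ldots,a_n), f(c,\ldots,c)) \in S_k$; idempotency collapses the second coordinate back to $c$, so $f(a_1,\ldots,a_n)$ again belongs to the set. For (3), let $T = \bigcup_{k \ge 0} S_k$. As the ascending union of the subuniverses $S_k$ of $\m a^2$ from (1), $T$ is itself a subuniverse; it is reflexive and symmetric (each $S_k$ is), and transitive via the identity $S_i \circ S_j = S_{i+j}$, so $T$ is a congruence of $\m a$. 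Simplicity forces $T = 0_A$ or $T = 1_A$: in the first case $S_1 \subseteq T = 0_A$ together with $0_A \subseteq S_1$ gives $S_1 = 0_A$, so no distance between distinct points is ever defined; in the second every pair of elements of $A$ eventually enters some $S_k$, so $\m a$ is connected with respect to $\m s$.

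I do not foresee a genuine obstacle here. The mildest subtlety is the symmetry of $S_k$, which requires noting that $S_1 \circ S_j = S_j \circ S_1$ for every $j$ (a short second induction on $j$ using associativity of relational composition), so that the expression for $S_k$ can be read with its factors in reverse order.
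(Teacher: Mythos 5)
Your proposal is correct and follows essentially the same route as the paper: establish reflexivity, symmetry and closure under operations for $S_1$ directly (using subdirectness of $\m s$), propagate these properties to $S_k$ through relational composition, deduce (2) from idempotency by applying operations componentwise to pairs $(a_i,c)\in S_k$, and obtain (3) by observing that $\bigcup_k S_k$ is the transitive closure of $S_1$, hence a congruence, which simplicity forces to be $0_A$ or $1_A$. (Your worry about commuting $S_1$ with $S_j$ is harmless but unnecessary: since every factor of $S_k$ is the same symmetric relation $S_1$, reversing the factors returns the identical expression.)
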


\begin{proof}
The symmetry of $S_1$ is immediate from its definition and its
reflexivity follows from $S$ being subdirect. To see that it is a
subuniverse of $\m a^2$, let $t(x_1,
\ldots, x_n)$ be a term operation of $\m a$ and $(a_i, b_i) \in S_1$
for $1
\le i
\le n$.  Then for all $i$ there are $c_i \in B$ with $(a_i, c_i)$ and $(b_i,
c_i)
\in S$. Applying $t$ to these pairs shows that $(t(\bar a), t(\bar
c))$ and $(t(\bar b), t(\bar c)) \in S$ and so $(t(\bar a), t(\bar
b)) \in S_1$.  This establishes that $S_1$ is a subuniverse of $\m
a^2$.  Since the relational product operation preserves the
properties of symmetry, reflexivity and being a subuniverse, it
follows that $S_k$ has all three properties, for  $k \ge 0$.

Suppose that $\m a$ is idempotent, $c \in A$, and $k \ge 0$.  If
$t(x_1, \ldots, x_n)$ is a term operation of $\m a$ and $a_i \in A$
with $d(a_i,c) \le k$, for $1 \le i \le n$, then $(a_i, c) \in S_k$
for all $i$.  By the first claim of this proposition, it follows
that $(t(a_1, \ldots, a_n), t(c, \ldots, c)) \in S_k$ since $S_k$ is
a subuniverse of $\m a^2$.  By idempotency we have $t(c, \ldots, c)
= c$ and so $(t(a_1, \ldots, a_n), c) \in S_k$, or $d(t(a_1, \ldots,
a_n),c) \le k$.  This establishes the second claim of the
proposition.

For the last claim, note that since $S_1$ is a symmetric, reflexive
subuniverse of $\m a^2$ then its  transitive closure  is a
congruence on $\m a$ that is equal to the union of the $S_k$, $k
\ge 0$.  Since $\m a$ is assumed to be simple then this congruence
is either $0_A$ or $1_A$.  In the former case we conclude that
$d(a,b)$ is undefined for all $a \ne b \in A$ and in the latter case
that for all $a$, $b \in A$, $(a,b) \in S_k$ for some $k \ge 0$ and
so $d(a,b)$ is defined.
\end{proof}

\begin{lm}\label{distance_inequality}
Let $\m A$ and $\m B$ be  finite algebras in $\vr v_3$ and $\m S$ a
subdirect subalgebra of~$\m A\times\m B$. Suppose that $\m A$ is
connected with respect to $\m s$. Then for every $x,y,z\in A$ we
have
\[
d(x\cdot y, z)\le \max\left(\left[d(x,y)+1\over 2\right],
d(y,z)\right)\,.
\]
\end{lm}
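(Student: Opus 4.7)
The plan is to argue by induction on $n=d(x,y)$, making essential use of the J\'onsson identities available in $CD(3)$: $p_1(u,v,v)=p_2(u,v,v)=u\cdot v$, $p_1(u,u,v)=u$, and $p_2(u,u,v)=v$ (the last following from $p_2(u,u,v)=p_3(u,u,v)=v$ in the definition of the J\'onsson equations for $k=3$).

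For $n=0$, idempotency gives $x\cdot y=y\cdot y=y$, so $d(x\cdot y,z)=m$, which is bounded by $\max(1,m)$. For $n=1$, fix a common $B$-partner $b$ of $x$ and $y$, so $(x,b),(y,b)\in S$. When $m=0$, applying $p_1$ componentwise to the triple $((x,b),(y,b),(y,b))$ yields $(x\cdot y,b)\in S$, hence $(x\cdot y,y)\in S_1$. When $m\ge 1$, fix a witness $e_0$ for the first step $(y,c_1)$ along a length-$m$ chain from $y$ to $z$. The key computation is
\[
p_1\bigl((x,b),(y,e_0),(y,e_0)\bigr)=(x\cdot y,\,b\cdot e_0)\in S,\qquad
p_2\bigl((y,b),(y,e_0),(c_1,e_0)\bigr)=(c_1,\,b\cdot e_0)\in S,
\]
where the second identity uses $p_2(y,y,c_1)=c_1$ and $p_2(b,e_0,e_0)=b\cdot e_0$. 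Hence $x\cdot y$ and $c_1$ share the $B$-partner $b\cdot e_0$, so $(x\cdot y,c_1)\in S_1$; prepending this step to the chain $c_1,\ldots,c_m=z$ gives $d(x\cdot y,z)\le m$.

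For the inductive step, fix a chain $x=a_0,a_1,\ldots,a_n=y$ with witnesses $b_0,\ldots,b_{n-1}$. The aim is to advance two steps along the $a$-chain at the cost of a single $S_1$-step in the target, so that the bound $\lceil(n+1)/2\rceil$ matches the number of bridging steps. The base-case trick extends to this setting because each interior element $a_i$ carries two $B$-partners $b_{i-1}$ and $b_i$, playing the role that the single common partner $b$ played in the $n=1$ case. Concretely, one applies $p_1$ and $p_2$ componentwise to carefully matched triples drawn from both chains (for example triples of the form $((x,b_0),(y,b_{n-1}),(y,e_0))$ together with $((a_1,b_0),(a_1,b_1),(c_1,e_0))$-type data), and uses the J\'onsson identities to force the second coordinates to coincide; this produces a common $B$-partner that bridges $x\cdot y$ with an element two indices deeper in the $a$-chain and, when possible, one index deeper in the $c$-chain. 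Iterating and appealing to the induction hypothesis on the shorter remaining configuration closes the argument.

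The principal obstacle is carrying out this ``advance-by-two'' reduction uniformly across all configurations of $n$ and $m$: na\"ive single-step induction gives only $d(x\cdot y,z)\le 1+\max(\lceil n/2\rceil,m)$, which is one step too large for odd $n$ and also for large $m$. Saving that extra step requires that whenever $m\ge 1$ the bridging step reuse a step of the $y$-to-$z$ chain rather than paying for it separately, and this is the combinatorial heart of the argument --- choosing the exact triples from $S$ to which $p_1$ and $p_2$ are applied so that the resulting second coordinates coincide by virtue of the $CD(3)$ identities.
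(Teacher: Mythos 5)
Your base cases ($n=0$ and $n=1$) are verified correctly, but the proof as a whole has a genuine gap: the inductive step is never actually carried out. The ``advance-by-two'' reduction is described only as applying $p_1$ and $p_2$ to ``carefully matched triples'' whose existence and second-coordinate bookkeeping you do not exhibit, and you yourself concede that making this work uniformly in $n$ and $m$ --- in particular avoiding the extra $+1$ that a naive single-step induction produces --- is the unresolved ``combinatorial heart'' of the argument. Since that is exactly the content of the lemma (the bound $\left[\frac{n+1}{2}\right]$ rather than something one step larger), the proposal is a plan with an acknowledged hole, not a proof.

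The gap disappears if you stop working inside $S$ with explicit $B$-partners and instead use the preceding proposition, which already tells you that each $S_d$ is a reflexive, symmetric \emph{subuniverse of} $\m A^2$. That is the paper's route, and it needs no induction at all: take a chain $x=a_0,\dots,a_n=y$ realizing $d(x,y)=n$ and let $a_k$ be its midpoint, so that $d(x,a_k)\le k$ and $d(a_k,y)\le k$ with $k\le\left[\frac{n+1}{2}\right]$. Then with $d=\max(k,m)$ the three pairs $(x,a_k)$, $(y,a_k)$, $(y,z)$ all lie in $S_d$, and applying the term $p_2$ coordinatewise (legitimate because $S_d$ is a subuniverse of $\m A^2$) gives
\[
\bigl(p_2(x,y,y),\,p_2(a_k,a_k,z)\bigr)\in S_d ,
\]
which by the identities $p_2(x,y,y)=x\cdot y$ and $p_2(a_k,a_k,z)=z$ is exactly $(x\cdot y,z)\in S_d$. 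Your coordinate-level computations in the base cases are essentially re-proving, by hand and only in special cases, the compatibility of $S_1$ that the proposition already supplies in general; lifting the argument to the relations $S_d$ themselves is what lets the whole lemma close in one application of $p_2$.
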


\begin{proof}
Let  $d(y,z) = m$, $d(x,y) = n$ and choose elements $a_i \in A$ for
$0 \le i \le n$ with $x=a_0$, $a_n=y$  and $(a_i, a_{i+1}) \in S_1$
for $0 \le i < n$.  For $k$ the largest integer below $[(n+1)/2]$ we
get that $d(x, a_k)$ and $d(a_k, y)$ are both at most~$k$. Therefore
if $d=\max(k,m)$, then the pairs $(x,a_k), (y,a_k), (y,z)$ are in
$S_d$, and so
\[
(p_2(x,y,y),p_2(a_k,a_k,z))\in S_d\,.
\]
But $p_2(x,y,y)=x\cdot y$ and $p_2(a_k,a_k,z)=z$, proving the lemma.
\end{proof}

\begin{cor}\label{distance_inequality_ideal}
For $\m a$, $\m b$ and $\m s$ as in the previous lemma, suppose that
$d(a,b) \le n$ for all $a$, $b \in A$. Let $m\ge [(n+1)/2]$ be any
integer and $c \in A$.  Then the set of all elements of $A$ whose
distance from $c$ is at most~$m$ is a \jon ideal of $\m a$.
\end{cor}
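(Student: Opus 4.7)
The plan is to verify directly that the set $I=\{a\in A : d(a,c)\le m\}$ satisfies the two defining properties of a \jon ideal: it is a subuniverse of $\m a$, and it is closed under the operation $x\mapsto u\cdot x$ for arbitrary $u\in A$.

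For the first property, I would simply invoke part (2) of the preceding proposition. Since $\vr v_3$ consists of idempotent algebras (the \jon terms are idempotent, and idempotency is inherited by all term operations), $\m a$ is idempotent, and the proposition tells us that $\{a\in A : d(a,c)\le m\}$ is already a subuniverse of $\m a$ for any $m\ge 0$. No further work is required here.

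For the closure property, the natural move is to apply Lemma \ref{distance_inequality} with the roles $x:=u$, $y:=x$, $z:=c$. This yields
\[
d(u\cdot x, c) \le \max\!\left(\left[\frac{d(u,x)+1}{2}\right],\, d(x,c)\right).
\]
Since $d(u,x)\le n$ by the global bound hypothesis, the first argument of the maximum is at most $[(n+1)/2]\le m$. Since $x\in I$, the second argument $d(x,c)$ is at most $m$. Hence $d(u\cdot x, c)\le m$, i.e.\ $u\cdot x \in I$, as required.

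I do not see a genuine obstacle here: both ingredients have just been established, and the corollary is a direct packaging of them. The only minor point to double-check is that Lemma \ref{distance_inequality} genuinely applies, which requires $\m a$ to be connected with respect to $\m s$ — but this is part of the hypothesis (indeed, the hypothesis that $d(a,b)$ is bounded by $n$ for all $a,b\in A$ forces connectedness).
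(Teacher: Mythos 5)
Your proof is correct and follows essentially the same route as the paper: the subuniverse part is exactly the appeal to idempotency via part (2) of the proposition, and the closure under left multiplication is the same application of Lemma \ref{distance_inequality} with $x:=u$, $y:=$ the element of the set, $z:=c$, using the global bound $n$ on distances together with $[(n+1)/2]\le m$. No issues.
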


\begin{proof}
As noted earlier the set $I = \{ a \in A\,:\, d(a,c) \le m\}$ is a
subuniverse of $\m a$ since $\m a$ is idempotent.  We need only show
that $I$ is closed under multiplication on the left.  So, suppose
that $a \in I$ and $u \in A$. Since $d(u,c)\le n$, we have $d(u
\cdot a, c)\le \max(m, d(a,c))\le m$ by the previous lemma.
\end{proof}

\begin{cor}\label{distance_inequality_connected}
Let $\m a$ and $\m b$ be finite members of $\vr v_3$ such that $\m
a$ is  \jon trivial and connected with respect to some subdirect
subalgebra $\m s$ of $\m a \times \m b$. Then $d(a,b) \le 1$ for all
$a$, $b \in A$ (or equivalently, $S_1 = A^2$).
\end{cor}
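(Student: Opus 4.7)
The plan is to leverage the finiteness of $\m a$ together with Corollary~\ref{distance_inequality_ideal} to show that the ``diameter'' of $\m a$ (the supremum of $d(a,b)$) collapses under the halving inequality, and then to rule out values bigger than $1$ using J\'onsson triviality.

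First I would observe that because $\m a$ is finite and connected with respect to $\m s$, the quantity
\[
n = \max\{\, d(a,b) \,:\, a,b \in A\,\}
\]
is a well-defined non-negative integer. The goal is to show $n \le 1$, since then $S_1 = S_n = A^2$.

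Next, fix any element $c\in A$. By the hypothesis on $n$, every $a \in A$ satisfies $d(a,c)\le n$, so I can apply Corollary~\ref{distance_inequality_ideal} with $m = [(n+1)/2]$ to conclude that the set
\[
I_c \;=\; \{\, a \in A \,:\, d(a,c)\le [(n+1)/2]\,\}
\]
is a J\'onsson ideal of $\m a$. Since $c \in I_c$ (as $d(c,c)=0$), this J\'onsson ideal is non-empty. Because $\m a$ is J\'onsson trivial, $I_c$ must therefore equal all of $A$. Since $c$ was arbitrary, this shows that the diameter of $\m a$ is at most $[(n+1)/2]$, giving the inequality $n \le [(n+1)/2]$.

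Finally, I would finish by a short arithmetic step: the inequality $n \le [(n+1)/2]$ forces $n \le 1$. Together with the reflexivity of $S_1$, this yields $S_1 = A^2$, which is equivalent to $d(a,b) \le 1$ for all $a,b \in A$. The only potential subtlety is the trivial case $|A|=1$, which is handled immediately by reflexivity of $S_1$. I do not anticipate any real obstacle: the substantive work has already been done in Lemma~\ref{distance_inequality} and Corollary~\ref{distance_inequality_ideal}, and this corollary is essentially a fixed-point argument that exploits J\'onsson triviality to iterate the halving bound down to its minimum.
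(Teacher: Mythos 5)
Your proof is correct and takes essentially the same approach as the paper: both apply Corollary~\ref{distance_inequality_ideal} to the set of points within distance $[(n+1)/2]$ of a fixed element and use \jon triviality to force that set to be all of $A$, so that $n\le[(n+1)/2]$ and hence $n\le 1$. The paper merely phrases this as a contradiction (a proper non-empty \jon ideal when $n\ge 2$) rather than as your direct argument.
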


\begin{proof}
Suppose that the maximum distance~$n$ between the points of~$A$ is
at least~$2$ and that $a$, $b \in A$ with $d(a,b) = n$. Then $m$,
the largest integer below $[(n+1)/2]$ is less than $n$.  From the
previous lemma, the set of all elements $u \in A$ with $d(a,u ) \le
m$ is a proper \jon ideal of $\m a$, contradicting that $\m a$ is
\jon trivial.
\end{proof}

\begin{lm}\label{connected_simple}
Let $\m A$, $\m b $ be finite members of $\vr v_3$ with $\m a$ \jon
trivial and simple and let $\m S$ be a subdirect subalgebra of $\m
A\times \m B$. Then either $S=A\times B$, or $S$ is the graph of an
onto homomorphism from~$\m B$ to~$\m A$.
\end{lm}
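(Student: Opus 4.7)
The plan is to split into two cases according to the dichotomy provided by the third claim of the Proposition preceding Lemma~\ref{distance_inequality}: since $\m a$ is simple, either $S_1 = 0_A$, or $\m a$ is connected with respect to $\m s$.

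In the disconnected case, $S_1 = 0_A$ says that whenever $(a,b)$ and $(c,b)$ both lie in $S$ we must have $a=c$. Combined with the subdirectness condition $\proj_{\{2\}}(S) = B$, this tells us that for each $b \in B$ there is a \emph{unique} $a \in A$ with $(a,b)\in S$, so $S$ is the graph of a function $f : B \to A$. The fact that $S$ is a subuniverse of $\m a\times\m b$ immediately makes $f$ a homomorphism, and $\proj_{\{1\}}(S)=A$ makes it surjective.

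In the connected case, Corollary~\ref{distance_inequality_connected} gives $S_1 = A^2$, and I want to conclude that $S = A\times B$. I plan to fix $b_0 \in B$ and consider $T_{b_0} = \{a \in A : (a, b_0)\in S\}$. By subdirectness $T_{b_0}$ is non-empty, and by idempotency of $\vr v_3$-algebras it is a subuniverse of $\m a$. The main step is to show that $T_{b_0}$ is in fact a \jon ideal of $\m a$: given $x \in T_{b_0}$ and $u\in A$, I need $u\cdot x \in T_{b_0}$. Since $S_1 = A^2$, I can pick $b\in B$ with $(u,b), (x,b)\in S$, and then apply $p_2$ coordinatewise to the three tuples $(u,b), (x,b), (x,b_0)\in S$. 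The first coordinate gives $p_2(u,x,x) = u\cdot x$ (using $p_1(x,y,y)=p_2(x,y,y)$), while the second coordinate gives $p_2(b,b,b_0) = b_0$ because the \jon equations force $p_2(x,x,y) = y$ (chasing $p_0(x,x,y)=p_1(x,x,y)=x$ and $p_2(x,x,y)=p_3(x,x,y)=y$). Hence $(u\cdot x, b_0) \in S$, as desired. \jon triviality of $\m a$ then forces $T_{b_0} = A$, and varying $b_0$ yields $S = A\times B$.

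The only non-routine part is locating the right triple of elements of $S$ and the right \jon term to produce $(u\cdot x, b_0)$ on the nose; once that algebraic identity is spotted, the rest is bookkeeping around the definition of a \jon ideal and the dichotomy for simple algebras.
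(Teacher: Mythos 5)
Your proof is correct and follows essentially the same route as the paper: the same simplicity dichotomy ($S_1=0_A$ versus connectedness), Corollary~\ref{distance_inequality_connected} to get $S_1=A^2$, and the same trick of applying $p_2$ to three pairs in $S$ (using $p_2(x,x,y)=y$ and $p_2(x,y,y)=x\cdot y$) to produce a \jon ideal that must be all of $A$. The only difference is cosmetic: you show each fiber $T_{b_0}=\{a:(a,b_0)\in S\}$ is itself a \jon ideal, which slightly streamlines the paper's version, where one fixes $a$ with $|B_a|$ maximal, proves $I=\{x:B_x=B_a\}$ is a \jon ideal, and then deduces $B_a=B$ from subdirectness.
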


\begin{proof}
 As $\m A$ is simple, then either $S_1 = 0_A$ or $\m a$ is
 connected with respect to $\m s$.  In the former case, we conclude
 that $S$ is the graph of an onto homomorphism from $\m b$ to $\m a$
 and in the latter, it follows from the previous corollary that $S_1
 = A^2$.

For $a \in A$, let $B_a = \{b \in B\,:\, (a,b) \in S\}$ and choose
$a$ with $|B_a|$ maximal.   Let $I$ denote the set of those elements
$x$ of $A$ for which $B_x = B_a$.  To complete the proof we will
need to demonstrate that $I = A$ and $B_a = B$.  To show that $I =
A$ it will suffice to prove that it is a \jon ideal of $\m a$.

Indeed, let $u\in A$ and $c\in I$ be arbitrary. Then $(u,c) \in S_1$
 (since $S_1 = A^2$) and therefore
there is a $b\in B$ such that $(u,b)$ and $(c,b)$ are in $S$. Note
that since $c\in I$ then $b\in B_a$. If $d$ is any element of $B_a$
then $c\in I$ implies that $(c,d)\in S$, so we get that
\[
(p_2(u,c,c), p_2(b,b,d)) = (u \cdot c,d)\in S.
\]
Since this holds for every $d\in B_a$, we conclude that $u \cdot c
\in I$.  Finally, since $S$ is subdirect it follows that $B_a=B$.
\end{proof}

We apply this lemma to obtain a simple description of subdirect
products of finite, simple, \jon trivial members of $\vr v_3$ and
then show how to use this description to prove that certain
$k$-minimal instances of the CSP have solutions, when $k \ge 3$.

\begin{lm}\label{full-product}  Let $\m a_i$, for $1 \le i \le n$,
be finite members of $\vr v_3$ with $\m a_1$  \jon trivial.  Let $\m
s$ be a subdirect product of the $\m a_i$'s such that for all $1  <
i \le n$, the projection of $S$ onto coordinates 1 and $i$ is equal
to $A_1 \times \ A_i$. Then $S =A_1 \times D$, where $D = \proj_{\{2
\le i \le n\}}(S)$.
\end{lm}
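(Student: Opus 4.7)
The plan is to view $\m s$ as a subdirect subalgebra of $\m a_1 \times \m b$, where $\m b$ is the subalgebra of $\prod_{2 \le i \le n} \m a_i$ on the universe $D$. For each $a \in A_1$ set $B_a = \{d \in D : (a,d) \in S\}$; by idempotency of $\vr v_3$ each $B_a$ is a subuniverse of $\m b$, and the projection hypothesis $\proj_{\{1,i\}}(S) = A_1 \times A_i$ forces $\proj_i(B_a) = A_i$ for every $i > 1$, so each $B_a$ is subdirect in $\prod_{2 \le i \le n} \m a_i$. Proving $B_a = D$ for every $a$ is equivalent to the desired conclusion $S = A_1 \times D$.

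Following the maximality pattern of Lemma~\ref{connected_simple}, I pick $a^* \in A_1$ maximizing $|B_{a^*}|$ and let $I = \{a \in A_1 : B_a = B_{a^*}\}$. Then $I$ contains $a^*$ and is a subuniverse of $\m a_1$: for $a_1,\ldots,a_k \in I$, any term operation $t$ of $\m a_1$, and any $d \in B_{a^*}$, the relations $(a_i,d) \in S$ combine under $t$ by idempotency of $\m b$ to give $(t(a_1,\ldots,a_k),d) \in S$, so $B_{a^*} \subseteq B_{t(\bar a)}$, and equality follows from maximality. Once $I$ is shown to be a \jon ideal, \jon triviality of $\m a_1$ forces $I = A_1$; then $B_a = B_{a^*}$ for every $a$, whence $D = \bigcup_a B_a = B_{a^*}$, and so $B_a = D$ for every $a$, giving $S = A_1 \times D$.

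To close $I$ under the operation $u \cdot -$, fix $u \in A_1$ and $c \in I$; by maximality it suffices to show $B_{u \cdot c} \supseteq B_{a^*}$. Given any $d \in B_{a^*} = B_c$, if I can produce an element $b \in B_u \cap B_c$, then $(u,b), (c,b), (c,d) \in S$, and applying $p_2$ componentwise yields
\[
p_2\bigl((u,b),(c,b),(c,d)\bigr) \;=\; \bigl(p_2(u,c,c),\, p_2(b,b,d)\bigr) \;=\; (u \cdot c,\, d) \in S,
\]
using the identities $p_2(x,y,y) = x \cdot y$ and $p_2(x,x,y) = y$. Thus the whole argument reduces to verifying that $B_u \cap B_c \ne \emptyset$ for every $u \in A_1$ and $c \in I$, equivalently that $S_1 = A_1 \times A_1$ in the notation preceding Lemma~\ref{distance_inequality}.

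The main obstacle is establishing $S_1 = A_1^2$. My approach is via Corollary~\ref{distance_inequality_connected}: since $\m a_1$ is \jon trivial it is enough to show that $\m a_1$ is connected with respect to $\m s$. I would prove connectedness by induction on $|A_1|$, reducing to Lemma~\ref{connected_simple} via a maximal-congruence quotient $\m a_1/\theta$: the quotient is simple and \jon trivial, and the induced image of $S$ modulo $\theta \times 0_{\m b}$ is a subdirect subalgebra of $\m a_1/\theta \times \m b$ still satisfying the full projection hypothesis. Invoking Lemma~\ref{connected_simple} on the quotient, the delicate step is to rule out its graph-of-homomorphism alternative, whose existence under our hypothesis would require the fibres of the corresponding homomorphism to be pairwise disjoint subdirect subalgebras of $\prod_{i \ge 2} \m a_i$ indexed by $A_1/\theta$, and this arrangement I expect to be incompatible with \jon triviality of the simple quotient when $|A_1/\theta| > 1$.
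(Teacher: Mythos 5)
Your opening reduction is sound: the maximality argument showing $I=\{a: B_a=B_{a^*}\}$ is a subuniverse, and the $p_2$ computation giving $B_c\subseteq B_{u\cdot c}$ once $u$ and $c$ have a common neighbour in $D$, are both correct. But, as you yourself isolate, the whole proof then rests on the claim $S_1=A_1^2$ (equivalently, connectedness of $\m a_1$ with respect to $\m s\le \m a_1\times\m d$), and this claim is never proved. It is not a small verification: a common neighbour is a single tuple of $S$ witnessing $u$ and $c$ simultaneously over \emph{all} coordinates $2,\dots,n$, while the hypotheses only give you full two-coordinate projections; asking for such a global witness is already most of the content of the lemma, so your reformulation shifts the difficulty rather than resolving it. Concretely, two steps of your sketch are missing. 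First, the ``delicate step'' is only an expectation. It is in fact true that the graph-of-homomorphism alternative cannot occur, but not for the reason you name: if $S/(\theta\times 0)$ were the graph of an onto homomorphism $\varphi\colon\m d\to\m a_1/\theta$, then the full projections onto $A_1\times A_i$ make every fibre of $\varphi$ subdirect in $\prod_{i\ge 2}\m a_i$, hence $\ker\varphi\vee\ker(\proj_i|_D)=1_D$ for each $i\ge 2$, while $\bigwedge_{i\ge 2}\ker(\proj_i|_D)=0_D$; congruence distributivity of $\m d$ (which lies in $\vr v_3$) then forces $\ker\varphi=1_D$, contradicting $|A_1/\theta|>1$. \jon triviality of the quotient plays no role here, so the incompatibility you ``expect'' holds, but it must be argued and your proposed reason is not the operative one.

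Second, even granting that step, the induction on $|A_1|$ is not set up. Ruling out the homomorphism case only shows that $S/(\theta\times 0)$ is the full product, which connects elements of $A_1$ lying in \emph{different} $\theta$-classes; to connect two elements of the same class you would have to restrict $S$ to that class and recurse, but a $\theta$-class is merely a subalgebra of $\m a_1$ and need not be \jon trivial, so an induction hypothesis tied to \jon triviality (as yours implicitly is, via Corollary~\ref{distance_inequality_connected} and the role you assign to \jon triviality in the quotient step) does not apply to it. The scheme can be repaired by proving connectedness from the projection hypotheses alone, with no \jon triviality assumption, but that is a different statement from the one you propose to recurse on, and none of this is in your write-up. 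For comparison, the paper's proof avoids the common-neighbour problem entirely: it inducts on the number of factors, and in the three-factor case shows that each fibre $I_{(u,v)}=\{a:(a,u,v)\in S\}$ is a \jon ideal by applying $p_2$ to tuples $(b,u,y)$, $(a,x,y)$, $(a,u,v)$ whose existence is immediate from the full projections onto coordinates $\{1,2\}$ and $\{1,3\}$ --- the middle tuple need only agree with its neighbours in one coordinate each, so no global witness is ever required.
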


\begin{proof}
We prove this by induction on $n$.  For $n = 2$, the result follows
by our hypotheses. Consider the case $n = 3$ and let $\m d$ be the
projection of $\m s$ onto $\m a_2 \times \m a_3$. Let $(u,v) \in D$
and let $I_{(u,v)} = \{a \in A_1\,:\, (a,u,v) \in S\}$. Our goal is
to show that $I_{(u,v)} = A_1$ and we can accomplish this by showing
that it is a non-empty \jon ideal. Clearly $I_{(u,v)}$ is a
non-empty subuniverse of $\m a_1$ since all algebras involved are
idempotent.

Let $a \in I_{(u,v)}$, $b \in A_1$ and choose elements
 $y \in A_3$ and $x \in A_2$ with $(b,u,y)$ and $(a,x,y)
\in S$.  By our hypotheses, these elements exist. Applying $p_2$
 to these elements, along with $(a,u,v)$, we get  the
element $(b \cdot a, u, v)$, showing that $b \cdot a \in I_{(u,v)}$.
Thus $I_{(u,v)}$ is a \jon ideal.

Now, consider the general case and suppose that the result holds for
products of fewer than $n$ factors.  Let $\m s_1 = \proj_{\{1 \le i
< n\}}(S)$ and $\m s_2 = \proj_{\{2 \le i <n\}}(S)$.  Then $\m s$ is
isomorphic to a subdirect product of $\m a_1$, $\m s_2$ and $\m a_n$
and, by induction, $S_1 = A_1 \times S_2$.  Then, applying the
result with $n = 3$ to this situation, we conclude that $S = A_1
\times D$, as required.
\end{proof}

\begin{cor}\label{trivial-product}
Let $\m a_i$ be finite, simple, \jon trivial members of $\vr v_3$,
for $1 \le i \le n$, and let $\m s$ be a subdirect product of the
$\m a_i$'s. If, for all $1 \le i < j \le n$, the projection of $S$
onto $A_i \times A_j$ is not the graph of a bijection then $S =
\prod_{1 \le i \le n} A_i$.
\end{cor}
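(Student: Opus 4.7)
The plan is to combine Lemma~\ref{connected_simple} with Lemma~\ref{full-product} in an induction on~$n$. The crucial first step is to observe that under the hypothesis, every pairwise projection of $S$ must be the full product. Fix indices $1 \le i < j \le n$ and let $S_{ij}$ denote the projection of $S$ onto $A_i \times A_j$; since $S$ is subdirect, so is $S_{ij}$. Because $\m a_i$ is simple and \jon trivial, Lemma~\ref{connected_simple} guarantees that either $S_{ij} = A_i \times A_j$ or $S_{ij}$ is the graph of an onto homomorphism $\varphi: \m a_j \to \m a_i$. In the second case the simplicity of $\m a_j$ forces $\ker\varphi \in \{0_{A_j}, 1_{A_j}\}$: with kernel $0_{A_j}$ we get an isomorphism, so $S_{ij}$ is the graph of a bijection, and with kernel $1_{A_j}$ the algebra $\m a_i$ is a single point, in which case $S_{ij}$ coincides with $A_i \times A_j$ anyway. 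The hypothesis rules out the bijective case, leaving $S_{ij} = A_i \times A_j$.

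Having established this, I would then induct on $n$. The case $n = 1$ is trivial. For $n \ge 2$, the previous paragraph tells us that the projection of $S$ onto coordinates $1$ and $i$ equals $A_1 \times A_i$ for every $i \ge 2$. Since $\m a_1$ is \jon trivial, Lemma~\ref{full-product} applies and yields $S = A_1 \times D$, where $D = \proj_{\{2 \le i \le n\}}(S)$. The relation $D$ is a subdirect product of the simple, \jon trivial algebras $\m a_2, \ldots, \m a_n$, and for $2 \le i < j \le n$ the projection of $D$ onto $A_i \times A_j$ equals the corresponding projection of $S$, hence is not the graph of a bijection. By the inductive hypothesis, $D = \prod_{2 \le i \le n} A_i$, and therefore $S = \prod_{1 \le i \le n} A_i$.

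I do not foresee a serious obstacle, since the two preceding lemmas do essentially all of the work. The only delicate point is the kernel case analysis used to refine the dichotomy of Lemma~\ref{connected_simple} into the statement that each $S_{ij}$ is either the full product or the graph of a bijection; once that refinement is in place, Lemma~\ref{full-product} drives the induction in a completely mechanical way.
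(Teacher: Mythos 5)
Your proposal is correct and follows essentially the same route as the paper: Lemma~\ref{connected_simple} together with simplicity of both factors yields that each pairwise projection is either the graph of a bijection or the full product, the bijective case is excluded by hypothesis, and Lemma~\ref{full-product} then drives the induction on $n$. Your explicit kernel analysis just spells out what the paper compresses into the parenthetical ``(since they are both simple).''
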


\begin{proof}
For $1 \le i < j \le n$, we have, by Lemma \ref{connected_simple}
that either the projection of $S$ onto $A_i \times A_j$ is the graph
of a bijection between the two factors (since they are both simple)
or is the full product.  The former case is ruled out by assumption
and so we are in a position to apply the previous lemma inductively
to reach the desired conclusion.
\end{proof}

\begin{df}\label{almost-triv}
A subdirect product $\m s$ of the algebras $\m a_i$, $1 \le i \le
n$, is said to be almost trivial if, after suitably rearranging the
coordinates, there is a partition of $\{1, 2, \ldots, n\}$ into
intervals $I_j$, $1 \le j \le p$, such that $S = \proj_{I_1}(S)
\times \cdots \times \proj_{I_p}(S)$ and, for each $j$, if $I_j =
\{i\,:\, u \le i \le v\}$ then there are bijections $\pi_i:A_u
\rightarrow A_i$, for $i \in I_j$ such that $\proj_{I_j}(S) =\{(a,
\pi_{u+1}(a), \ldots, \pi_v(a))\,:\, a \in A_u\}$.
\end{df}

\begin{cor}\label{almost-trivial}
 Let $\m a_i$ be finite, simple, \jon trivial members of $\vr
v_3$, for $1 \le i \le n$, and let $\m s$ be a subdirect product of
the $\m a_i$'s.  Then $\m s$ is almost trivial.
\end{cor}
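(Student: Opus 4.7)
The plan is to partition the coordinates according to which pairs are linked by a bijection inside $S$, and then to apply Corollary~\ref{trivial-product} to a transversal of the partition.

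First I would define $i\sim j$ to mean that $\proj_{\{i,j\}}(S)$ is the graph of a bijection $\pi_{ij}\colon A_i\to A_j$. Reflexivity and symmetry are immediate. For transitivity, given $i\sim j\sim k$ with bijections $\pi_{ij}$ and $\pi_{jk}$, any tuple in $S$ whose $i$-coordinate is $a$ must have $j$-coordinate $\pi_{ij}(a)$ and hence $k$-coordinate $\pi_{jk}(\pi_{ij}(a))$; by subdirectness this shows that $\proj_{\{i,k\}}(S)$ is exactly the graph of $\pi_{jk}\circ\pi_{ij}$, so $i\sim k$.

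Next I would argue that $i\not\sim j$ forces $\proj_{\{i,j\}}(S)=A_i\times A_j$. By Lemma~\ref{connected_simple} this projection is either the full product or the graph of an onto homomorphism $f\colon\m a_j\to\m a_i$. Simplicity of $\m a_j$ restricts $\ker f$ to $0_{A_j}$ or $1_{A_j}$: the former makes $f$ a bijection, contradicting $i\not\sim j$; the latter forces $|A_i|=1$, and the projection then equals $A_i\times A_j$ outright. So in the non-equivalent case the projection is always the full product.

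Now let $I_1,\dots,I_p$ be the equivalence classes of $\sim$, rearranged so each $I_j$ is an interval, and pick a representative $u_j\in I_j$. The subdirect product $\proj_{\{u_1,\dots,u_p\}}(\m s)$ has no pairwise projection equal to the graph of a bijection, since distinct $u_j$'s lie in distinct classes. Corollary~\ref{trivial-product} then gives $\proj_{\{u_1,\dots,u_p\}}(S)=\prod_{j=1}^p A_{u_j}$. Within each $I_j$, every coordinate $i\in I_j$ is a bijective function of $u_j$ via $\pi_{u_j,i}$, so $S$ consists of exactly those tuples obtained by freely choosing values at the representatives and then filling in the remaining coordinates through the bijections. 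This displays $S$ in the form required by Definition~\ref{almost-triv}.

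The step I expect to need the most care is the analysis of the $i\not\sim j$ case, where one must rule out the intermediate ``graph of a non-bijective onto homomorphism'' alternative in Lemma~\ref{connected_simple} using simplicity on both sides; once that is settled, Corollary~\ref{trivial-product} does the structural work for free.
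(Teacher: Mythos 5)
Your proposal is correct and follows essentially the same route as the paper: the same equivalence relation $i\sim j$ via bijective pairwise projections, Lemma~\ref{connected_simple} (with simplicity) to force full products across distinct classes, and Corollary~\ref{trivial-product} applied through the bijections $\pi_{i,j}$ to assemble the almost trivial decomposition. The details you supply (transitivity of $\sim$, the kernel analysis ruling out a non-bijective onto homomorphism, and the transversal argument) are precisely the steps the paper dismisses as ``elementary,'' so nothing is missing.
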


\begin{proof}
For $1 \le i,j \le n$, set $i\sim j$ if $i = j$ or the projection of
$S$ onto $A_i$ and $A_j$ is equal to the graph of a bijection
between these two factors.  In this case, let $\pi_{i,j}$ denote
this bijection.

It is not hard to see that  $\sim $ is an equivalence relation on
the set $\{1, 2, \ldots, n\}$ and, by applying Lemma
\ref{connected_simple}, if $i \not\sim j$ then the projection of $S$
onto $A_i$ and $A_j$ is equal to $A_i \times A_j$.  By using the
bijections $\pi_{i,j}$ and Corollary \ref{trivial-product} it is
elementary to show that $\m s$ is indeed almost trivial.
\end{proof}

For $\vr a$ a finite sequence of finite algebras,  $P = (\vr a, \vr
c)$ denotes a multi-sorted instance of the CSP whose domains are the
universes of the algebras in $\vr a$ and whose constraint relations
are subuniverses of cartesian products of members from $\vr a$.

\begin{thm}\label{induction-base}
Let $\vr a$ be a finite sequence of finite, simple, \jon trivial
members of $\vr v_3$ and let $P = ( \vr a, \vr c)$ be a subdirect,
$k$-minimal instance of the CSP for some $k\ge 3$. If the constraint
relations of $P$ are all non-empty then $P$ has a solution.
\end{thm}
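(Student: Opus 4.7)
The plan is to exploit the rigid structure of subdirect products afforded by Corollary~\ref{almost-trivial}: every constraint relation $R$ of $P$ is a subdirect subalgebra of a product of finite, simple, \jon trivial members of $\vr v_3$, hence is almost trivial. So each constraint sees only two kinds of pairwise interaction between its coordinates: ``full product'' or ``graph of a bijection''. Using $k$-minimality with $k \ge 2$, these pairwise relations glue into a well-defined global picture.

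First, for each pair $\{i,j\}$ contained in some scope, let $R_{ij}$ denote the common $2$-projection of all constraints whose scope includes $\{i,j\}$ (well-defined by $k$-minimality). Since $P$ is subdirect, $R_{ij}$ is a subdirect subalgebra of $\m a_i \times \m a_j$, and Lemma~\ref{connected_simple} shows that $R_{ij}$ is either the full product $A_i\times A_j$ or the graph of a bijection $\pi_{ij}:A_i\to A_j$. Define $i \sim j$ to hold exactly in the latter case (plus $i\sim i$).

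Next, I would show $\sim$ is an equivalence relation and that the $\pi_{ij}$ compose consistently within each block. This is the step that forces $k\ge 3$: for any triple $\{i,j,k\}$ in some scope, the common $3$-projection $R_{ijk}$ is a subdirect subalgebra of $\m a_i\times\m a_j\times\m a_k$, hence almost trivial by Corollary~\ref{almost-trivial}. Two coordinates lying in different blocks of an almost-trivial decomposition project to the full product, so the assumption that $R_{ij}$ and $R_{jk}$ are bijection graphs forces $i,j,k$ into a single block of the decomposition of $R_{ijk}$; this yields both $i\sim k$ (with $\pi_{ik}$ a bijection) and the cocycle identity $\pi_{ik} = \pi_{jk}\circ \pi_{ij}$.

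Finally, build a solution by choosing a representative $r_C$ in each $\sim$-class $C$, picking any $a_{r_C}\in A_{r_C}$, and defining $a_i := \pi_{r_C\,i}(a_{r_C})$ for $i\in C$; by the cocycle identity this is independent of the choice of representative. To verify the assignment satisfies each constraint $(S,R)$, invoke Corollary~\ref{almost-trivial}: $R = \prod_j \proj_{I_j}(R)$ almost trivially, and one checks that the partition $\{I_j\}$ of $S$ coincides with the restriction of $\sim$ to $S$ (two coordinates in $S$ lie in the same $I_j$ iff their projection is a bijection graph iff they are $\sim$-equivalent), with the same bijections. Hence the assignment projects into $\proj_{I_j}(R)$ for every $j$, and therefore lies in $R$.

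The main obstacle is the transitivity-plus-cocycle step, as both ingredients are needed to ensure the local bijections $\pi_{ij}$ piece together into a coherent global choice; Corollary~\ref{almost-trivial} applied to the $3$-projections (available precisely because $k\ge 3$) delivers them in a single stroke.
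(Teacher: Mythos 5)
Your proof is correct and is essentially the argument the paper intends: the paper gives no in-text proof of Theorem \ref{induction-base}, deferring to the proof of Corollary 3.4 in \cite{bulatov-2semi}, and the tools it sets up for that purpose (Lemma \ref{connected_simple} and Corollary \ref{almost-trivial}) are exactly what you use --- pairwise projections are full products or bijection graphs, $3$-minimality makes the bijections compose coherently via the almost-trivial $3$-projections, and a solution is assembled $\sim$-class by $\sim$-class. The only detail worth adding is the degenerate case of one-element domains, where a full product is simultaneously a bijection graph and your definition of $\sim$ needs a convention, but this affects nothing in the argument.
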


Definition \ref{almost-triv} and analogs of  Corollary
\ref{almost-trivial} and Theorem \ref{induction-base} can be found
at the end of Section 3.3 in \cite{bulatov-2semi}. The proof of
Corollary 3.4 given in that paper can be used to prove our Theorem
\ref{induction-base}.  As we shall see, this theorem will form the
base of the inductive proof of our main result.

\subsection{The reduction to \jon trivial algebras}

The goal of this subsection is to show how to reduce a $k$-minimal
instance $P$ of the CSP whose domains all lie in $\vr v_3$ and whose
constraint relations are all non-empty to another $k$-minimal,
subdirect instance $P'$ whose domains are all \jon trivial and whose
constraint relations are non-empty.  In order to accomplish this, we
will need to work with a suitably large $k\ge 3$.

To start, let $\vr a = (\m a_1, \ldots, \m a_n)$ be a sequence of
finite algebras from $\vr v_3$  and let $M= \max\{|A_i|\,:\, 1 \le i
\le n\}$. Let $k
> 0$  and $P = (\vr a, \vr c)$ be a $k$-minimal instance of the CSP
 with $\vr c$ consisting of
the constraints $C_i = (S_i, R_i)$, $1 \le i \le m$.  By taking
suitable subalgebras of the $\m a_i$ we may assume that $P$ is
subdirect and, of course,  we also assume that the $R_i$ are all
non-empty.  In addition, $k$-minimality assures that we may assume
that the scope of each constraint of $P$ consists of at least $k$
variables and that no two constraints have the same $k$-element set
as their scopes.

Since $P$ is $k$-minimal then its system of partial solutions over
$k$-element sets satisfies an important compatibility property.
Namely, if $I$ and $K$ are $k$-element sets of coordinates then
$\proj_{(I\cap K)}(P_I) = \proj_{(I\cap K)}(P_K)$.  In this section
we will denote $P_I$ by $\Lambda(I)$ and call this function the
$k$-system (of partial solutions) determined by $P$.  Since $P$ is
subdirect then for all $I$, $\Lambda(I)$ will be a subdirect product
of the algebras $\m a_i$, for $i \in I$.

We wish to consider the situation in which some $\m a_i$, say $\m
a_1$, has a proper \jon ideal $J$.  The  main result of this
subsection is that if the scopes of the constraints of $P$ all have
size at most $k$ (and hence exactly $k$), or if $k \ge M^2$ then we
can reduce the question of the solvability of $P$ to the solvability
of a $k$-minimal instance with $\m a_1$ replaced by $J$. Doing so
will allow us to proceed by induction to reduce our original
instance down to one whose domains are all \jon trivial.

So, let $J$ be a proper non-empty \jon ideal of $\m a_1$ and define
$\Lambda_J$ to be the following function on the set of $k$-element
subsets of $\{1, 2, \ldots, n\}$:
\begin{itemize}
\item If $I$ is a $k$-element set that includes 1 then define
$\Lambda_J(I)$ to be $\{ \vec{a}\in \Lambda(I)\,:\, \vec{a}(1) \in
J\}$.

\item If $1 \notin I$, define $\Lambda_J(I)$ to be the set of all $\vec{a} \in
\Lambda(I)$ such that for all $i \in I$ the restriction of $\vec{a}$
to $I\setminus \{i\}$ can be extended to an element of
$\Lambda_J(\{1\} \cup (I\setminus\{i\}))$.
\end{itemize}

\begin{lm}\label{Gamma_J}
 If $k \ge 3$ then
\begin{enumerate}
\item $\Lambda_J(I)$ is non-empty for all $I$ and
 if $1 \in I$ then the projection of $\Lambda_J(I)$ onto the
first coordinate is equal to $J$.
\item  For $I$, $K$, $k$-element subsets of $\{1, 2, \ldots, n\}$,
$\proj_{(I\cap K)}(\Lambda_J(I)) = \proj_{(I\cap K)}(\Lambda_J(K))$.
\end{enumerate}
\end{lm}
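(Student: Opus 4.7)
My plan is to prove the two statements in sequence, leaning heavily on the facts that $P$ is $k$-minimal and subdirect, and that $J$ is itself a J\'onsson ideal of $\m{a}_1$.

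\textbf{Part (1), case $1 \in I$.} This is the easy case and reduces to $k$-minimality. Because $P$ is subdirect and $k$-minimal and $|I|=k$, the projection $\proj_{\{1\}}(\Lambda(I))$ equals $A_1$. Since $J\subseteq A_1$ is non-empty, for each $j\in J$ there is an $\vec{a}\in\Lambda(I)$ with $\vec{a}(1)=j$; such an $\vec{a}$ lies in $\Lambda_J(I)$, giving simultaneously non-emptiness and $\proj_{\{1\}}(\Lambda_J(I))=J$.

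\textbf{Part (1), case $1 \notin I$.} The plan is to express $\Lambda_J(I) = \bigcap_{i\in I}T_i$ where
\[
T_i = \{\vec{a}\in\Lambda(I)\,:\, \vec{a}|_{I\setminus\{i\}} \in \proj_{I\setminus\{i\}}(\Lambda_J(\{1\}\cup I\setminus\{i\}))\},
\]
and then to verify that each $T_i$ is a non-empty J\'onsson ideal of $\Lambda(I)$ before arguing that their intersection is non-empty. Non-emptiness of $T_i$ is obtained by picking $\vec{b}\in\Lambda_J(\{1\}\cup I\setminus\{i\})$ (which exists by the previous case) and then extending $\vec{b}|_{I\setminus\{i\}}$ back into $\Lambda(I)$ via $k$-minimality. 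To see $T_i$ is a subuniverse, observe that $\Lambda_J(\{1\}\cup I\setminus\{i\})$, being the preimage of the subuniverse $J$ under the (homomorphic) projection to coordinate~1, is itself a subuniverse, whence so is its projection to $I\setminus\{i\}$; and $T_i$ is the preimage of this under $\Lambda(I)\to\Lambda(I\setminus\{i\})$. For left-multiplication absorption: given $\vec{a}\in T_i$ with witness $\vec{b}$ and arbitrary $\vec{u}\in\Lambda(I)$, lift $\vec{u}|_{I\setminus\{i\}}$ to a $\vec{v}\in\Lambda(\{1\}\cup I\setminus\{i\})$ by $k$-minimality; then $\vec{v}\cdot\vec{b}$ is the required witness for $\vec{u}\cdot\vec{a}$, since its first coordinate $\vec{v}(1)\cdot\vec{b}(1)$ lies in $J$ (because $J$ is a J\'onsson ideal and $\vec{b}(1)\in J$) and its projection to $I\setminus\{i\}$ equals $(\vec{u}\cdot\vec{a})|_{I\setminus\{i\}}$. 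To then deduce $\bigcap T_i\ne\emptyset$, I plan to invoke the intersection property for J\'onsson ideals developed in \cite{valeriote-intprop}, applied to the compatible family $\{T_i\}_{i\in I}$ which is governed by the single ideal $J$ propagated through the $k$-minimal system $\Lambda$.

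\textbf{Part (2).} This is a case analysis on whether $I,K$ contain $1$. If $1\in I\cap K$, then $k$-minimality of $\Lambda$ reduces both $\proj_{I\cap K}(\Lambda_J(I))$ and $\proj_{I\cap K}(\Lambda_J(K))$ to $\{\vec{c}\in\Lambda(I\cap K)\,:\,\vec{c}(1)\in J\}$. If $1\in I$ but $1\notin K$, pick any $i\in K\setminus I$; then the defining extension property of $\Lambda_J(K)$ at $i$ shows that elements of $\Lambda_J(K)$ restrict on $K\setminus\{i\}\supseteq I\cap K$ to projections of elements of $\Lambda_J(\{1\}\cup K\setminus\{i\})$, which together with $k$-minimality gives one containment; the reverse containment is obtained by using Part~(1) and the extension property to realize any element of $\proj_{I\cap K}(\Lambda_J(I))$ as a restriction of some element of $\Lambda_J(K)$. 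The remaining case $1\notin I\cup K$ is handled analogously by passing through a convenient $k$-set containing $1$.

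\textbf{Main obstacle.} The crux of the proof is showing $\bigcap_{i\in I}T_i\ne\emptyset$ in Part~(1). A naive iterated multiplication $\vec{c}_{i_1}\cdot(\vec{c}_{i_2}\cdot(\cdots\cdot\vec{c}_{i_k}))$ of witnesses only lands in $T_{i_k}$, since the J\'onsson-ideal absorption is one-sided; combining witnesses across the different $T_i$ is where the subtle structural argument from \cite{valeriote-intprop} is essential. Establishing compatibility in Part~(2) when only one of $I$, $K$ contains the coordinate $1$ is a secondary technical point, but it reduces cleanly to the non-emptiness statement in Part~(1).
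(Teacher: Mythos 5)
Your easy case ($1\in I$) and your verification that each $T_i$ is a \jon ideal of $\Lambda(I)$ are fine, but the crux of Part (1) --- that $\Lambda_J(I)=\bigcap_{i\in I}T_i$ is non-empty when $1\notin I$ --- is not actually proved: you defer it to ``the intersection property for \jon ideals developed in \cite{valeriote-intprop}'', and no result of that form is available. Non-empty \jon ideals of an algebra in $\vr v_3$ need not intersect at all: if the algebra has a majority term $m$ and one takes $p_1=p_2=m$, then $x\cdot y=p_1(x,y,y)=m(x,y,y)=y$, so every non-empty subuniverse is a \jon ideal, and disjoint subuniverses abound (for instance the two singletons of the two-element lattice). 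Hence any usable intersection statement would have to exploit precisely the compatibility of the $k$-system $\Lambda$ and the fact that all the $T_i$ are induced by the single ideal $J\subseteq A_1$ --- which is exactly the content of the lemma, not something that can be cited; the intersection results of \cite{valeriote-intprop} concern projections of subalgebras of products, not families of ideals of a fixed algebra. As you yourself note, iterated multiplication of witnesses only lands in one $T_i$, so the essential step is left open.

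The paper closes this gap by an explicit construction rather than an abstract intersection argument. Taking $I=\{2,\ldots,k+1\}$, it starts from a tuple $(a_1,a_2,\ldots,a_k)\in\Lambda_J(\{1,\ldots,k\})$ (your easy case), picks partial solutions $u$, $v$, $w$ on three overlapping $k$-sets via $k$-minimality, and defines the new entry $a_{k+1}=p_1(u,v,w)$; applying $p_1$ coordinatewise, together with the identities $p_1(x,y,x)=x$ and $p_1(x,x,y)=x$ and the ideal property of $J$ (which produces first coordinates of the form $x\cdot a_1\in J$), it checks that the one tuple $(a_2,\ldots,a_{k+1})$ satisfies the reduction condition at every $i\in I$ simultaneously. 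Note that this argument yields more than non-emptiness: it shows that any element of $\Lambda_J$ on a $k$-set containing $1$, after dropping a coordinate, extends to an element of $\Lambda_J$ on the shifted $k$-set. It is this stronger extension statement, not bare non-emptiness, that is needed in Part (2) to push an element of $\proj_{(I\cap K)}(\Lambda_J(I))$ into $\Lambda_J(K)$ when $1\notin K$; your sketch of Part (2) appeals only to ``Part (1) and the extension property'' and therefore inherits the same gap.
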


\begin{proof}
Since $P$ is subdirect then for any $k$-element set $I$ with $1 \in
I$ we have that $\Lambda_J(I)$ is non-empty and projects onto $J$ in
the first coordinate.

Let $I$ be some $k$-element set of coordinates with $1 \notin I$.
For ease of notation, we may assume that $I = \{2, 3, \ldots, k,
k+1\}$.  Let $\vec{a} = (a_1, a_2, a_3, \ldots, a_k)$ be any member
of $\Lambda_J(\{1,2, \ldots, k\})$.  We will show that there is some
$a_{k+1} \in A_{k+1}$ such that $(a_2, a_3, \ldots, a_{k+1}) \in
\Lambda_J(I)$.  This will not only show that $\Lambda_J(I)$ is
non-empty, but will also allow us to easily establish  condition (2)
of the lemma.

We construct the element $a_{k+1}$ as follows.  Since $\Lambda$ is
the $k$-system  for $P$ then there is some element $u \in A_{k+1}$
such that
 $( a_2, \ldots, a_k,u) \in \Lambda(I)$. Furthermore,
there is some $v\in A_{k+1}$ such that $(a_1, a_3, \ldots, a_k, v)
\in \Lambda(\{1,3,\ldots, k+1\})$ and then some $v' \in A_2$ with
$(v', a_3, \ldots, a_k, v) \in \Lambda(I)$.  Similarly, there are
$w$ and $w'$ with $(a_1, a_2, a_4, \ldots, a_k, w) \in
\Lambda\{1,2,4, \ldots, k+1\})$ and $(a_2, w', a_4, \ldots, a_k, w)
\in \Lambda(I)$. Let $a_{k+1} = p_1(u,v,w)\in A_{k+1}$.  By applying
  $p_1$ to the tuples $( a_2, \ldots, a_k,u)$, $(v', a_3, \ldots,
a_k,v)$ and $(a_2, w', a_4, \ldots, a_k, w)$ we see that the tuple
$(a_2, a_3, \ldots, a_{k+1}) \in \Lambda(I)$.

We now need  to show that for all $2 \le i \le k+1$ there is some $b
\in J$ with
\[(b, a_2, \ldots, a_{i-1}, a_{i+1}, \ldots, a_{k+1}) \in
\Lambda_J(\{1,2, \ldots, i-1, i+1, \ldots, k+1\}).\]
There are a
number of cases to consider.
\begin{itemize}
\item If $i = k+1$ then the tuple $(a_2, \ldots, a_k)$ extends to $(a_1, a_2,
\ldots, a_k)$, a member of $\Lambda_J(\{1, 2, \ldots, k\})$, as
required.

\item If $i = 2$: There are  $x \in A_1$ and $y \in A_3$ with
$(x, a_3, \ldots, a_k, u)$ and $(a_1, y, a_4, \ldots, a_k, w)$ in
$\Lambda(\{1,3,
\ldots, k+1\})$.  Applying $p_1$ to these tuples and the
tuple $(a_1, a_3, a_4, \ldots, a_k, v)$ (in the second variable)
produces the tuple $(x\cdot a_1, a_3, \ldots, a_k, a_{k+1}) \in
\Lambda(\{1,3, \ldots, k+1\})$.  Since $a_1 \in J$ and $J$ is a \jon
ideal, then $x\cdot a_1 \in J$ and so this tuple belongs to
$\Lambda_J(\{1,3, \ldots, k+1\})$, as required.

\item If $i = 3$ or $3 < i < k+1$ then  small variations of the previous argument will
work.

\end{itemize}

To complete the proof of this lemma we need to establish the
compatibility of $\Lambda_J$ on overlapping elements of its domain.
Let $I$ and $L$ be distinct members of the domain of $\Lambda_J$
with non-empty intersection $N$ and let $i \in I\setminus L$ and $l
\in L \setminus I$.

Let $\vec{a} \in \Lambda_J(I)$ and let $\vec{c}$ be the projection
of $\vec{a}$ onto the coordinates in $N$.  The restriction of
$\vec{a}$ to $I \setminus \{i\}$ extends to an element $\vec{a}' \in
\Lambda_J(\{1\} \cup (I \setminus\{i\}))$.  Since $\Lambda$ is the
$k$-system for $P$, the restriction of $\vec{a}'$ to $\{1\}\cup N$
extends to an element $\vec{b}' $ of $\Lambda(\{1\}\cup (L
\setminus\{l\}))$. Note that $\vec{b}'(1) \in J$ and the restriction
of $\vec{b}'$ to $N$ is $\vec{c}$. By the first part of this proof,
it follows that the restriction of $\vec{b}'$ to $L \setminus \{l\}$
extends to an element $\vec{b}$ of $\Lambda_J(L)$ as required.
\end{proof}

\begin{cor}\label{local} If all of the constraints of $P$ have scopes of size
$k$ then there is a $k$-minimal instance $P_J$ of the constraint
satisfaction problem over the domains $\m J$ and the $\m a_i$, for
$2 \le i \le n$,  whose constraint relations are all non-empty and
whose solution set is contained in the solution set of $P$.
\end{cor}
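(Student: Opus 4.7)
The plan is to build $P_J$ by restricting the first domain from $\m a_1$ to the subalgebra $\m J$ (the \jon ideal $J$ is a subuniverse of $\m a_1$ by definition), keeping all scopes $S_i$ unchanged, and replacing each constraint relation $R_i$ by $R'_i := \Lambda_J(S_i)$. Because every scope has size exactly $k$, $\Lambda_J$ is defined on each $S_i$, so the construction is well-posed; and when $1 \in S_i$, every tuple of $R'_i$ has first coordinate in $J$, so $R'_i$ really lives over the modified sequence of domains.

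The three things to verify are non-emptiness of the new constraint relations, $k$-minimality of $P_J$, and the solution-set inclusion. Non-emptiness of each $R'_i = \Lambda_J(S_i)$ is immediate from Lemma~\ref{Gamma_J}(1). For $k$-minimality, the coverage condition is inherited from $P$ since the scopes are unchanged; and for the projection-agreement condition, given $I \subseteq S_i \cap S_j$ with $|I| \le k$, Lemma~\ref{Gamma_J}(2) yields $\proj_{S_i \cap S_j}(\Lambda_J(S_i)) = \proj_{S_i \cap S_j}(\Lambda_J(S_j))$, and projecting further onto $I$ gives $\proj_I(R'_i) = \proj_I(R'_j)$ as required. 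For solution-set inclusion, any solution $f$ of $P_J$ satisfies $f(1) \in J \subseteq A_1$, so $f$ is a map into the domains of $P$; and for each constraint, $f(S_i) \in R'_i \subseteq \Lambda(S_i) = R_i$, where the final equality uses that $P$ is $k$-minimal together with $|S_i| = k$.

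There is no real obstacle at this stage: essentially all the content has already been absorbed into Lemma~\ref{Gamma_J}, which builds the restricted system $\Lambda_J$ and verifies its compatibility properties using the \jon terms $p_1$, $p_2$ together with the \jon-ideal closure condition $u \cdot x \in J$ for $x \in J$. The only point in the corollary itself that requires any care is the hypothesis that every scope has size exactly $k$: this is exactly what permits us to take $\Lambda_J(S_i)$ uniformly as the new constraint relation and to cite Lemma~\ref{Gamma_J} directly, without having to deal with scopes of other sizes (which is precisely the issue that forces the supplementary assumption $k \ge M^2$ in the second version of the reduction announced at the start of the subsection).
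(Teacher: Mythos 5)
Your proposal is correct and follows essentially the same route as the paper: the paper likewise defines $P_J$ by replacing each ($k$-element) scope's constraint relation with $\Lambda_J(I)$ and cites Lemma~\ref{Gamma_J} for non-emptiness and the projection-compatibility needed for $k$-minimality, with the solution-set inclusion coming from $\Lambda_J(I)\subseteq\Lambda(I)=R_I$. Your write-up merely spells out these verifications in slightly more detail than the paper's two-sentence argument.
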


\begin{proof} It follows from our assumptions on
the sizes of the scopes of the constraints of $P$ that the
constraints can be indexed by the $k$-element subsets of $\{1, 2,
\ldots, n\}$ and that for such a subset $I$, the constraint $C_I$ is
of the form $(I, R_I)$ where $R_I$ is a subdirect product of the
algebras $\m a_i$, for $i \in I$.

We set $P_J$ to be the instance of the CSP over the domains $\m j$
and the $\m a_i$, for $2 \le i \le n$, that has, for each
$k$-element subset $I$ of $\{1, 2, \ldots, n\}$, the constraint
$C'_I = (I, R'_I)$, where $R'_I = \Lambda_J(I)$. It follows by
construction and from the previous lemma that $P_J$ is a $k$-minimal
instance of the CSP whose constraint relations are all non-empty and
whose solutions are also solutions of $P$.
\end{proof}

The previous corollary can be used to establish the tractability of
the constraint languages arising from finite members of $\vr v_3$,
while the following lemma will be used to prove that these languages
are in fact globally tractable.

\begin{lm}\label{jon-reduction}
Assume that $k \ge M^2$ and let $C = (S, R)$ be a constraint of $P$.
Then there is a subuniverse $R_J$ of $R$ such that for all
$k$-element subsets $I$ of $S$, the projection of $R$ onto $I$ is
equal to $\Lambda_J(I)$.
\end{lm}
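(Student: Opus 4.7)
The plan is to define $R_J$ as the set of tuples in $R$ whose every $k$-projection is forced into the prescribed set $\Lambda_J(I)$:
\[
R_J := \{\, \vec{r} \in R \,:\, \proj_I(\vec{r}) \in \Lambda_J(I) \mbox{ for every } k\mbox{-element } I \subseteq S\,\}.
\]
Each $\Lambda_J(I)$ is a subuniverse of $\prod_{i \in I} A_i$: directly when $1 \in I$, since it is the intersection of the subuniverse $\Lambda(I)$ with the cylinder over $J \subseteq A_1$; and when $1 \notin I$, as a finite intersection of preimages of such subuniverses under coordinate projections (using Lemma \ref{Gamma_J} to see that the building blocks are non-empty and closed). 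Hence $R_J$ is a subuniverse of $R$, being the intersection of $R$ with finitely many cylinder subuniverses, and the inclusion $\proj_I(R_J) \subseteq \Lambda_J(I)$ is immediate.

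The heart of the proof is the reverse inclusion: given a $k$-element $I \subseteq S$ and $\vec{a} \in \Lambda_J(I)$, one must construct a single $\vec{r} \in R$ with $\proj_I(\vec{r}) = \vec{a}$ whose projection to every other $k$-element $I' \subseteq S$ also lies in $\Lambda_J(I')$. The strategy is to start from any base extension $\vec{r}_0 \in R$ of $\vec{a}$ (which exists by $k$-minimality of $P$), and then iteratively correct it by combining it, via \jon terms, with carefully chosen witness tuples from $R$ realizing the defining conditions of the violated $\Lambda_J(I')$'s. The identities $p_1(x,x,y) = p_1(x,y,x) = x$ are exploited to hold fixed those coordinates that are already correct, while $p_1(x,y,y) = p_2(x,y,y) = x \cdot y$, together with the closure property $u \cdot x \in J$ for $x \in J$ of the \jon ideal, drives the distinguished coordinate~$1$ into $J$ whenever $1 \in S$.

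The hypothesis $k \ge M^2$ supplies the combinatorial slack that makes this coordinated correction feasible. Since any two coordinate domains $A_i, A_j$ satisfy $|A_i \times A_j| \le M^2 \le k$, within a $k$-element subset of $S$ the pair-types must repeat, which leaves enough room for the iterative \jon-term corrections to stabilize: each correction step can be arranged to fix a currently violated $\Lambda_J(I')$-membership without destroying the memberships already in place. The principal obstacle is precisely this coordination: arranging the sequence of corrections so that each step preserves the $I$-projection and the $\Lambda_J$-memberships already achieved, rather than trading one violation for another. The bound $k \ge M^2$ is the ingredient that prevents the correction procedure from entering an unending cycle of mutually incompatible adjustments, and once this is set up the resulting $\vec{r}$ lies in $R_J$ and projects to $\vec{a}$ on $I$, finishing the proof.
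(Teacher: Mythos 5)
Your definition of $R_J$ coincides with the paper's (tuples of $R$ all of whose $k$-element projections lie in the corresponding $\Lambda_J$), and the easy half --- that $R_J$ is a subuniverse and $\proj_I(R_J)\subseteq \Lambda_J(I)$ --- is fine. But the heart of the lemma, namely that every $\vec{a}\in\Lambda_J(I)$ extends to a member of $R_J$, is not actually proved: you describe an ``iterative correction'' of a base extension by \jon terms and then assert that the hypothesis $k\ge M^2$ ``prevents the correction procedure from entering an unending cycle.'' You yourself identify the real obstacle (a correction aimed at one violated $\Lambda_J(I')$-membership may destroy memberships already achieved), and nothing in the proposal resolves it: no rule for choosing the witness tuples, no invariant preserved by a correction step, no termination argument. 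Moreover, the stated role of $k\ge M^2$ --- that pair-types $|A_i\times A_j|\le M^2\le k$ must repeat inside a $k$-element set --- does not correspond to any mechanism that makes such a scheme work; as written, the proposal is a statement of intent rather than a proof of the surjectivity half.

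For comparison, the paper organizes the hard direction so that the coordination problem never arises. When $1\in S$ a single application of $p_1$ suffices: one takes witnesses $\vec{c}_i\in R$ with first coordinate in $J$ agreeing with $\vec{a}$ off one coordinate of $I$, uses the pigeonhole principle over $J$ (here $k> |J|$ is enough) to find two witnesses with the \emph{same} first coordinate $j\in J$, and forms $p_1(\vec{c},\vec{c}_2,\vec{c}_3)$, whose first coordinate is $\vec{c}(1)\cdot j\in J$ while the identities $p_1(x,y,x)=p_1(x,x,y)=x$ keep the projection to $I$ equal to $\vec{a}$. When $1\notin S$ the argument is a monotone induction on sets $K$ over which a tuple is ``reduced,'' enlarging $K$ by one coordinate at a time; at each step only the constraints involving the new coordinate need attention, the induction hypothesis supplies witnesses $\vec{a}_i$ already reduced over $(K\cup\{s+3\})\setminus\{i\}$, and $k\ge M^2$ is used for a first pigeonhole (among at least $M^2$ witnesses, at least $M$ agree at the new coordinate) followed by a second pigeonhole over $J$ (since $M>|J|$) to keep coordinate $1$ of the auxiliary partial solutions in $J$. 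Without some such structuring of the extension process and the two pigeonhole counts, your argument does not go through.
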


\begin{proof}
For $K$ a subset of $S$ and $\vec{a} \in R$, we will say that
$\vec{a}$ is reduced over $K$ if for all $(k-1)$-element subsets $I$
of $K$, the restriction of $\vec{a}$ to $I$ can be extended to an
element of $\Lambda_J(\{1\}\cup I)$. We define $R_J$ to be the set
of all tuples $\vec{a} \in R$ that are reduced over $S$. $R_J$ is
also equal to all elements $\vec{a}$ of $R$ such that for all
$k$-element subsets $I$ of $S$, the restriction of $\vec{a}$ to $I$
is in $\Lambda_J(I)$. $R_J$ is naturally a subuniverse of $R$ and so
the challenge is to show that it satisfies the conditions of the
lemma. Our proof breaks into two cases, depending on whether or not
the coordinate 1 is in $S$.

Suppose that $1 \in S$.  We may assume that $S= \{ 1, 2, \ldots,
m\}$ for some $m\le n$.  We need to show that if $I$ is a $k$
element subset of $S$ and $\vec{a} \in \Lambda_J(I)$ then there is
some $\vec{b} \in R_J$ whose restriction to $I$ is $\vec{a}$.

First consider the sub-case where $1 \in I$.  If $\vec{a} \in
\Lambda_J(I)$ then by the $k$-minimality of $P$ there is some
$\vec{b} \in R$ whose restriction to $I$ is $\vec{a}$. Since
$\vec{b}(1) = \vec{a}(1 )\in J$  it follows that $\vec{b}$ is in
$R_J$, as required.

Now, suppose that $1 \notin I$ and assume that $I = \{2, 3,\ldots,
k+1\}$.  By the $k$-minimality of $P$ there is some $\vec{c} \in R$
whose restriction to $I$ is $ \vec{a}$.  For each $2 \le i \le k+1$
there is some $j_i \in J$  and some $\vec{c}_i \in R$ such that
$\vec{c}_i(1) = j_i$ and such that the restrictions of $\vec{c}_i$
and $\vec{a}$ to $I \setminus \{i\}$ are the same.

Since $k > |J|$ it follows from the Pigeonhole principle that there
are $i \ne l$ with $j_i = j_l$. We may assume that $i = 2$ and $l =
3$ and set $j = j_i$.  Define $\vec{b}$ to be $p_1(\vec{c},
\vec{c}_2, \vec{c}_3)$.  This element belongs to $R$ and satisfies:
$\vec{b}(1) = \vec{c}(1 )\cdot j \in J$ and the restriction of
$\vec{b}$ to $I$ is $\vec{a}$. To establish this equality over
coordinate 2 we make use of the identity $p_1(x,y,x) = x$ and over
coordinate 3 $p_1(x,x,y) = x$. Finally, $\vec{b}$ is in $R_J$ since
$\vec{b}(1 )\in J$.

For the remaining case, assume that $1 \notin S$, say $S = \{ 2,3,
\ldots, m+1\}$. We will show by induction on $s$ that if $k-1 \le s
\le m-1$, $K$ is a subset of $\{2, 3, \ldots, m+1\}$ of size $s$ and
$\vec{a} \in R$ is reduced over $K$ then if $i \in S\setminus K$
there is some $\vec{b} \in R$ that is reduced over $K \cup \{i\}$
and such that $\proj_K(\vec{a}) = \proj_K(\vec{b})$.
 A consequence of this claim is that for any
$k$-element subset $I$ of $S$, any element of $\Lambda_J(I)$ can be
extended to a member of $R_J$. From this, the lemma follows.

Lemma \ref{Gamma_J} establishes the base of this induction. Assume
the induction hypothesis holds for $k -1 \le s < m -1$ and let $K$
be a subset of $\{2, 3, \ldots, m+1\}$ of size $s+1$. By symmetry,
we may assume that $K = \{2, 3, \ldots, s+2\}$.  Let $\vec{a} \in R$
be reduced over $K$. We will show that there is some $\vec{a}' \in
R$ which equals $\vec{a}$ over $K$ and is reduced over $K \cup
\{s+3\}$.

By the induction hypothesis, for each $2 \le i \le s+2$  there is
some $\vec{a}_i \in R$ such that the projections of $\vec{a}$ and
$\vec{a}_i$ onto $K \setminus \{i\}$ are the same and $\vec{a}_i$ is
reduced over $(K \cup\{s+3\}) \setminus \{i\}$.   By the Pigeonhole
principle it follows that there is some $a \in A_{s+3}$ and  a set
$Q$ contained in $K$ of size at least $M$ such that for  $i \in Q$,
$\vec{a}_i(s+3) = a$.

Let $i$ and $l$ be distinct members of $Q$ and let $\vec{a}'$ be the
element $p_1(\vec{a}, \vec{a}_{i}, \vec{a}_{l})$ of $R$. Note that
over the coordinates in $K$, $\vec{a}'$ and $\vec{a}$ are equal and
that at $s+3$, $\vec{a}'$ equals $b\cdot a$, where $ b =
\vec{a}(s+3)$.

We claim that $\vec{a}'$ is reduced over $K \cup \{ s+3\}$. To
establish this we need to show that over any subset $U$ of $K \cup
\{ s+3\}$  of size $k-1$, the restriction to $U$ of $\vec{a}'$ can
be extended to a member of $\Lambda_J( \{1\}\cup U)$. When $U$
avoids the coordinate $s+3$ there is nothing to do, since $\vec{a}$
is reduced over $K$.

So, assume that $U$ contains $s+3$ and let $\vec{d}$ be an extension
to some element in $\Lambda(\{1\}\cup U)$  of the restriction of
$\vec{a}$ to $U$.  Since for each $v \in Q$ the element $\vec{a}_v$
is reduced over $(K\cup\{s+3\})\setminus \{v\}$ then there is a
member $\vec{c}_v$ of $\Lambda_J(\{1\}\cup U)$ whose restriction to
$U \setminus \{v\}$ is equal to the restriction of $\vec{a}_v$ over
this set.  If there is some $v \in Q \setminus U$ then the element
$p_1(\vec{d}, \vec{c}_v, \vec{c}_v) \in \Lambda_J(\{1\}\cup U)$
witnesses that the restriction of $\vec{a}'$ to $U$ can be extended
as desired.

If, on the other hand, $Q \subseteq U$ then choose two elements $u$
and $v$ of $Q$ such that $\vec{c}_u(1) = \vec{c}_v(1) \in J$.  An
application of the Pigeonhole principle ensures the existence of
these elements since $|Q| > |J|$.  Then, the element $p_1(\vec{d},
\vec{c}_u, \vec{c}_v) \in \Lambda_J(\{1\}\cup U)$ and its
restriction to $U$ is equal the restriction of $\vec{a}'$ on $U$.
\end{proof}

\begin{cor}\label{global}
If $k \ge M^2$ then there is a $k$-minimal instance $P_J$ of the
constraint satisfaction problem over $\m J$ and the $\m a_i$, for $2
\le i \le n$, whose constraint relations are all non-empty and whose
solution set is contained in the solution set of $P$.
\end{cor}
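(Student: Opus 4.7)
The plan is to construct $P_J$ by keeping the same scopes $S_i$ as in $P$, but replacing the domain $A_1$ by $J$ and replacing each constraint relation $R_i$ by the subuniverse $(R_i)_J$ furnished by Lemma~\ref{jon-reduction}. Because that lemma guarantees that for every $k$-element subset $I$ of $S_i$ we have $\proj_I((R_i)_J) = \Lambda_J(I)$, and because every scope of $P$ has size at least $k$, each $(R_i)_J$ projects onto the non-empty set $\Lambda_J(I)$ for some $k$-element $I \subseteq S_i$ and so is itself non-empty. Moreover, whenever $1 \in S_i$, choosing such an $I$ with $1 \in I$ shows via Lemma~\ref{Gamma_J}(1) that the first coordinate of any tuple in $(R_i)_J$ lies in $J$; hence $(R_i)_J$ is genuinely a relation over the new domain sequence $\m J, \m a_2, \ldots, \m a_n$.

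Next I would verify that the resulting instance $P_J$ is $k$-minimal. The first $k$-minimality condition (every $k$-element set of coordinates is contained in some scope) is immediate because the scopes are unchanged from $P$. For the compatibility condition, suppose $(S_1, (R_1)_J)$ and $(S_2, (R_2)_J)$ are two constraints of $P_J$ and that $I \subseteq S_1 \cap S_2$ has size at most $k$. If $|I| = k$, then Lemma~\ref{jon-reduction} gives $\proj_I((R_1)_J) = \Lambda_J(I) = \proj_I((R_2)_J)$ directly. If $|I| < k$, I would pick $k$-element supersets $K_1 \subseteq S_1$ and $K_2 \subseteq S_2$ of $I$ (possible since each scope has size at least $k$) and apply Lemma~\ref{Gamma_J}(2) to obtain $\proj_{K_1 \cap K_2}(\Lambda_J(K_1)) = \proj_{K_1 \cap K_2}(\Lambda_J(K_2))$, then project further down to $I \subseteq K_1 \cap K_2$ to obtain the required equality.

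Finally, the containment of solutions is immediate: each $(R_i)_J$ is a subset of $R_i$ and $J \subseteq A_1$, so any solution of $P_J$ is automatically a solution of $P$. There is no substantial technical obstacle in this corollary; all the real work was packaged in Lemma~\ref{jon-reduction}, which supplied the compatible subuniverses $(R_i)_J$ having the prescribed $k$-projections, and in Lemma~\ref{Gamma_J}, which supplied the compatibility of those projections across overlapping scopes. The corollary is simply the natural assembly of these pieces into a single multi-sorted CSP instance with the new first domain $\m J$.
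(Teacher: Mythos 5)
Your construction is exactly the paper's: keep the scopes, replace each constraint relation $R$ by the subuniverse $R_J$ from Lemma~\ref{jon-reduction}, and read off non-emptiness, $k$-minimality (via the prescribed projections $\Lambda_J(I)$ and Lemma~\ref{Gamma_J}), and solution containment from $R_J \subseteq R$. The argument is correct and just spells out the verification that the paper compresses into one line.
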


\begin{proof} From the preceding lemma it follows that the instance
$P_J$  over the domains $\m j$ and the $\m a_i$, for $2 \le i \le
n$, with constraints $C' = (S, R_J)$, for each constraint $C = (S,
R)$ of $P$, is $k$-minimal and has all of its constraint relations
non-empty.  Since the constraint relations of $P_J$ are subsets of
the corresponding constraint relations of $P$ then the result
follows.
\end{proof}

\begin{thm} \label{jon-trivial}
Let $\vr a  = (\m a_1, \ldots, \m a_n)$ be a  sequence of finite
algebras  from $\vr v_3$ and let $P = (\vr a, \vr c)$ be a
$k$-minimal instance of the CSP whose constraint relations are
non-empty. If $k \ge 3$ and the sizes of the scopes of the
constraints of $P$ are bounded by $k$ or if $k \ge M^2$, where $M=
\max\{|A_i|\,:\, 1 \le i \le n\}$, then there is a subdirect
$k$-minimal instance $P'$ of the CSP over  \jon trivial subalgebras
of the $\m a_i$ such that the constraint relations of $P'$ are
non-empty and the solution set of $P'$ is contained in the solution
set of $P$.
\end{thm}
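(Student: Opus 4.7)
The plan is to prove the theorem by induction on the quantity $N = \sum_{i=1}^{n} |A_i|$. First, I would normalize $P$ so that it is subdirect: by $k$-minimality (and $k \ge 3$), for each coordinate $i$ all projections $\proj_{\{i\}}(R_j)$ with $i \in S_j$ coincide, so I can replace $\m a_i$ by this common subuniverse, which is itself a finite algebra in $\vr v_3$ since $\vr v_3$ is a variety. This normalization does not increase $N$ or $M = \max_i |A_i|$; it preserves $k$-minimality, non-emptiness of constraint relations, any bound on scope sizes, and the inequality $k \ge M^2$ whenever it was assumed. After it, $P$ is a subdirect instance with the same solution set.

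If every domain $\m a_i$ is already J\'onsson trivial, then $P' = P$ satisfies every requirement of the theorem. Otherwise some $\m a_j$ possesses a proper non-empty J\'onsson ideal $J$; relabeling, I assume $j = 1$. Under the first set of hypotheses ($k \ge 3$ and scope sizes bounded by $k$), combined with the fact that $k$-minimality allows us to assume all scopes have size at least $k$ (as described in the paragraph preceding Lemma \ref{Gamma_J}), every constraint of $P$ has scope of size exactly $k$, so I would apply Corollary \ref{local}. Under the second set of hypotheses ($k \ge M^2$), I would apply Corollary \ref{global}. In either case the result is a $k$-minimal instance $P_J$ over the domains $\m J, \m a_2, \ldots, \m a_n$ whose constraint relations are non-empty and whose solution set is contained in that of $P$.

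Since $J$ is a proper subuniverse of $A_1$, passing from $P$ to $P_J$ strictly decreases $N$. The hypotheses on $k$ carry over to $P_J$: the bound $k \ge M^2$ is preserved because the maximum domain size does not grow, and the scope-size bound is inherited because the scopes of $P_J$ are identical to those of $P$ in both corollaries. The induction hypothesis then supplies a subdirect $k$-minimal instance $P'$ over J\'onsson trivial subalgebras of $\m J, \m a_2, \ldots, \m a_n$, which are in particular subalgebras of $\m a_1, \ldots, \m a_n$, with non-empty constraint relations and solution set contained in that of $P_J$, hence of $P$. The main work is purely bookkeeping: checking that subdirectness, $k$-minimality, the scope hypotheses, and non-emptiness of constraints all survive both the subdirect normalization and the single J\'onsson ideal reduction. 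No structural idea beyond what Corollaries \ref{local} and \ref{global} already deliver is required.
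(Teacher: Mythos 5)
Your proposal is correct and follows essentially the same route as the paper, whose proof of Theorem \ref{jon-trivial} is literally ``repeated application of Corollaries \ref{local} and \ref{global}''; your induction on $\sum_i |A_i|$, with the subdirect normalization and the checks that scope bounds, $k$-minimality, non-emptiness and $k \ge M^2$ are preserved, is just that repeated application made explicit. The bookkeeping you supply (including the reduction to scopes of size exactly $k$ via the normalization preceding Lemma \ref{Gamma_J}) is exactly what the paper leaves implicit.
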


\begin{proof}  This theorem is proved by repeated application of
Corollaries \ref{local} and \ref{global}.
\end{proof}

\subsection{The reduction to simple algebras}

In this subsection we show, for $k \ge 3$, how to reduce a
$k$-minimal instance of the CSP whose domains are \jon trivial
members of $\vr v_3$  and whose constraint relations are all
non-empty to one which has in addition, domains that are simple
algebras.  Our development closely follows parts of the proof of
Theorem 3.1 in \cite{bulatov-2semi}.

\begin{df}  Let $\m a_i$, $1 \le i \le m$, be similar algebras and
let $\Theta = (\theta_1, \ldots, \theta_m)$ be a sequence of
congruences $\theta_i \in \cn(\m a_i)$.
\begin{enumerate}
\item $\prod_{i = 1}^{m} \theta_i$ denotes the congruence on $\prod_{i =
1}^{m} \m a_i$ that identifies two $m$-tuples $\vec{a}$ and
$\vec{b}$ if and only if $(a_i, b_i) \in \theta_i$ for all $i$.

\item
If $I$ is a subset of $\{1, 2, \ldots, m\}$ and $\m r$ is a
subalgebra of $\prod_{i \in I}\m a_i$ then $\m r/\Theta$ denotes the
quotient of $\m r$ by the restriction of the congruence $\prod_{i
\in I}\theta_i$ to $R$.
\end{enumerate}
\end{df}

Let $\vr a = (\m a_1, \ldots, \m a_n)$ be a sequence of finite, \jon
trivial members of $\vr v_3$ and let $P = (\vr a, \vr c)$ be a
subdirect, $k$-minimal instance of the CSP whose constraint
relations are all non-empty. Let $\vr c = \{C_1, C_2, \ldots, C_m\}$
where, for $1 \le i \le m$, $C_i = (S_i, R_i)$ for some subset $S_i$
of $\{1,2, \ldots, n\}$ and some subuniverse $R_i$ of $\prod_{i \in
S_i}\m a_i$.  Suppose that one of the $\m a_i$ is not simple, say
for $i = 1$, and let $\theta_1 $ be a maximal proper congruence of
$\m a_1$.

Recall that for $I \subseteq \{1, 2, \ldots, n\}$, $P_I$ denotes the
set of partial solutions of $P$ over the variables $I$.  If $|I| \le
k$ then since $P$ is $k$-minimal, $P_I$ is non-empty and is a
subdirect subuniverse of $\prod_{i \in I}\m a_i$.

 Since the algebra
$\m a_1/\theta_1$ is a simple, \jon trivial algebra then it follows
by Lemma \ref{connected_simple} that for $2 \le i \le n$,
$P_{\{1,i\}}/(\theta_1 \times 0_{A_i})$ is either the graph of a
homomorphism $\pi_i$ from $\m a_i$ onto $\m a_1/\theta_1$ or is
equal to $ A_1/\theta_1 \times A_i$.  Let $W$ consist of $1$ along
with the set of all $i$ for which the former holds. For $2 \le i \le
n$, let $\theta_i$ be the kernel of the map $\pi_i$ if $i
 \in W$, and $0_{A_i}$ otherwise.

Let $\Theta = (\theta_1, \ldots, \theta_n)$ and set $P/\Theta = (\vr
a/\Theta, \vr c/\Theta)$ where $\vr a/\Theta = (\m a_1/\theta_1,
\ldots, \m a_n/\theta_n)$ and $\vr c/\Theta$ consists of the
constraints $C_i/\Theta = (S_i, R_i/\Theta)$, for $ 1\le i \le m$.

Note that since $P$ is subdirect and $k$-minimal then so is
$P/\Theta$ and that each $\m a_i /\theta_i$ is \jon trivial, since
this property is preserved by taking quotients.

\begin{lm}\label{pullback}  If the instance $P/\Theta$ has a solution, then there is
some $k$-minimal instance $P' = (\vr a', \vr c')$ such that
\begin{itemize}
\item $\vr a' = (\m a'_1, \ldots, \m a'_n)$, where for each $1 \le i
\le n$,  $\m a'_i$ a subalgebra of $\m a_i$.
\item $A'_1 $ is a proper subset of $A_1$,
\item $\vr c' = \{C'_1, \ldots, C'_m\}$ where, for each $1 \le i \le m$,
 $C'_i = (S_i, R'_i)$ for
some non-empty subuniverse $R'_i$ of $R_i$.
\end{itemize}
Hence, any solution of $P'$ is a solution of $P$.
\end{lm}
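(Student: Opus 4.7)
The plan is to use the solution $\sigma$ of $P/\Theta$ to cut each domain down to the $\theta_i$-class $B_i := \sigma(i)$, cut each constraint relation down correspondingly, and then apply the $k$-minimality algorithm to obtain $P'$. Since every algebra in $\vr v_3$ is idempotent, each congruence class $B_i$ is a subuniverse of $\m a_i$, so $B_i$ underlies a subalgebra $\m a'_i$ of $\m a_i$. Because $\theta_1$ is a proper congruence of $\m a_1$, $A'_1 = B_1$ is a proper subset of $A_1$.

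For each constraint $C_j = (S_j, R_j)$ of $P$, set $\hat R_j := R_j \cap \prod_{i \in S_j} B_i$. Each $\hat R_j$ is a subuniverse of $\prod_{i \in S_j}\m a_i$, as the intersection of two subuniverses, and is non-empty: $\sigma$ being a solution of $P/\Theta$ forces $\sigma|_{S_j} \in R_j/\Theta$, and any preimage in $R_j$ lies coordinatewise in the $B_i$. Let $\hat P$ be the preliminary instance with domains $\m a'_i$ and constraints $\hat R_j$; apply the $k$-minimality algorithm of Proposition~\ref{min-alg} to $\hat P$ to obtain an equivalent $k$-minimal instance $P'$ whose constraint relations $R'_j$ are subuniverses of the $\hat R_j$'s. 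Since $R'_j \subseteq \hat R_j \subseteq R_j$ for every $j$, any solution of $P'$ is also a solution of $P$.

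The main obstacle is to show that each $R'_j$ is non-empty, i.e.\ that the $k$-minimality algorithm does not prune any relation of $\hat P$ to the empty set. For each $k$-subset $I$ of $\{1,\ldots,n\}$, put $\Lambda'(I) := P_I \cap \prod_{i \in I} B_i$; this set is non-empty because $\sigma|_I$ is a partial solution of $P/\Theta$ over $I$ and the quotient map $P_I \to (P/\Theta)_I$ is surjective. It suffices to show that $\{\Lambda'(I)\}_I$ is the $k$-system of $\hat P$, and for this the crucial step is: given a $k$-subset $I \subseteq S_j$ and a $\vec{b} \in \Lambda'(I)$, produce $\vec{r} \in \hat R_j$ with $\proj_I(\vec{r}) = \vec{b}$. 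I would attack this lifting by taking an arbitrary extension $\vec{t} \in R_j$ of $\vec{b}$ (furnished by the $k$-minimality of $P$) together with a witness $\vec{s} \in \hat R_j$, observing that $\vec{t}$ and $\vec{s}$ agree modulo $\Theta$ on $I$ (both projecting to $\sigma|_I$), and combining them through the J\'onsson operations $p_1, p_2$ so as to preserve $\vec{b}$ on $I$ while forcing every coordinate of the result into the respective $B_i$. Executing this J\'onsson-term lifting, exploiting the fine structure of the equations in Definition~\ref{JonEqu}, is the technical core of the proof.
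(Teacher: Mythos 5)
Your construction deviates from the paper's in a way that is not cosmetic and that you do not bridge. You shrink \emph{every} domain to the class $B_i=\sigma(i)$; but recall how $\Theta$ was chosen: $\theta_i=0_{A_i}$ for all $i\notin W$, so off $W$ your restriction pins the coordinate to the single value $\sigma(i)$. The lifting you correctly identify as the crux --- given a $k$-subset $I\subseteq S_j$ and $\vec{b}\in P_I\cap\prod_{i\in I}B_i$, find $\vec{r}\in R_j$ with $\proj_I(\vec{r})=\vec{b}$ and \emph{all} coordinates of $\vec{r}$ in the $B_i$ --- is then strictly stronger than what the structure of $R_j$ provides, and you leave it unproved ("the technical core"). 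What is actually available, via Lemma \ref{connected_simple} and Lemma \ref{full-product} (this is where \jon triviality enters, and your sketch never invokes $W$, the maps $\pi_i$, or these lemmas), is the decomposition of $R_j/\Theta$ as the product of a diagonal over $W\cap S_j$ with $\proj_{S_j\setminus W}(R_j)$. That decomposition lets one adjust the $\theta$-classes of the $W$-coordinates of a tuple while keeping its non-$W$ coordinates exactly; it gives no control over \emph{which element} of the class $s_1$ sits at coordinate $1$ once the non-$W$ coordinates are frozen at $\sigma$. Concretely, if $1\in I$, your lifting demands a tuple of $R_j$ equal to the prescribed $\vec{b}(1)\in s_1$ at coordinate $1$ and equal to $\sigma$ on $S_j\setminus W$; but the fiber $\{a\in A_1: a$ together with $\sigma|_{S_j\setminus\{1\}}$ lies in $R_j\}$ is only known to meet $s_1$, not to contain $\vec{b}(1)$, and the $CD(3)$ terms $p_1,p_2$ supply no rectangularity allowing you to splice $\vec{b}(1)$ onto $\sigma|_{S_j\setminus\{1\}}$ (with a Mal'cev or near-unanimity term such a splice exists; here it does not). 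So the non-emptiness of the relations after your $k$-minimality pass is exactly the unproved point, and I do not believe the claim you need is true in general.

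The paper sidesteps this by restricting \emph{only} the $W$-coordinates: $A'_i=s_i$ for $i\in W$ and $A'_i=A_i$ otherwise, with $R'_j=R_j\cap\prod_{i\in S_j}A'_i$, and it verifies $k$-minimality of $P'$ directly (no appeal to the $k$-minimality algorithm, which incidentally could also alter the scopes required by the statement). The identity $\proj_I(R'_j)=\proj_I(R_j)\cap\prod_{l\in I}A'_l$ is then provable: if $I$ meets $W$, any $\vec{t}\in R_j$ extending $\vec{b}$ automatically has all its $W\cap S_j$-coordinates in the correct classes, by the diagonal structure of $\proj_{W\cap S_j}(R_j/\Theta)$; if $I$ misses $W$, the product decomposition permits re-choosing the $W$-part of $\vec{t}$ inside the classes $s_i$ while leaving $\proj_{S_j\setminus W}(\vec{t})$, hence $\vec{b}$, untouched. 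To repair your argument you would either have to relax your restriction to the coordinates in $W$ --- which is the paper's proof --- or establish your stronger lifting, for which you currently offer only an intention.
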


\begin{proof}
Let $(s_1, \ldots, s_n)$ be a solution of $P/\Theta$.  We can regard
each  $s_i$  as a congruence block of $\theta_i$  and hence as a
subuniverse  of $\m a_i$. For $i \in W$, define $\m a'_i$ to be the
subalgebra of $\m a_i$ with universe $s_i$ and for $i \notin W$, set
$\m a'_i = \m a_i$.  For $1 \le j \le m$, let
\[
R'_j = R_j \cap  \prod_{i \in S_j} A'_i.
\]

We now set out to prove that the instance $P' = (\vr a', \vr c')$
has the desired properties.  Since $\theta_1$ is a proper congruence
of $\m a_1$ then $s_1$ is a proper subset of $A_1$ and so $A'_1$ is
properly contained in $A_1$.  Since $(s_1, \ldots, s_n)$ is a
solution to $P/\Theta$ it follows that for $1 \le j \le m$,  $R'_j$
is a non-empty subuniverse of $R_j$.

We need only  verify that $P'$ is $k$-minimal, so  let $1 \le a < b
\le m$ and $I$ be some subset of $S_a \cap S_b$ of size at most $k$.
To establish that $\proj_I(R'_a) = \proj_I(R'_b)$ it will suffice to
show that
\[
\proj_{I } (R'_i) = \proj_{I}(R_i) \cap \prod_{l\in I}A'_l.
\]
for all $i$, since $P$ is $k$-minimal.

By the definition of $R'_i$ it is immediate that the relation on the
left of the equality sign is contained in that on the right.  In the
case that $W \cap S_i= \emptyset$  the other inclusion is also
clear.

If $W\cap S_i \ne\emptyset$ we have that $\proj_{W\cap
S_i}(R_i/\Theta)$ is a subdirect product of simple, \jon trivial
algebras that are all isomorphic to $\m a_1/\theta_1$.  Since the
projection of this subdirect product onto any two coordinates in
$W\cap S_i$ is equal to the graph of a bijection then in fact, the
entire subdirect product is isomorphic to $\m a_1/\theta_1$ in a
natural way (using the bijections $\pi_i$ from the definition of
$W$). Then, using Lemma \ref{full-product} and the definition of $W$
(or more precisely, the complement of $W$), we conclude that
$R_i/\Theta$ is isomorphic to $\m a_1/\theta_1 \times D$, where $D =
\proj_{(S_i\setminus W)}(R_i)$.

Now, suppose that $\vec{a} \in \proj_{I}(R_i) \cap \prod_{l\in
I}A'_l$.  Then there is some $\vec{b} \in R_i$ with
$\proj_{I}(\vec{b}) = \vec{a}$.  If $W\cap I = \emptyset$ then, by
the concluding remark of the previous paragraph,
$\proj_{(S_i\setminus W)}(\vec{b})$ and hence $\proj_{I}(\vec{b})$
can be extended to an element of $R_i$ that lies in $\prod_{l \in
S_i}A'_l$ (here we use the fact that we have a solution of
$P/\Theta$ to work with). This establishes that, in this case,
$\vec{a} \in \proj_I(R'_i)$.

Finally, suppose that for some $w$ we have  $w \in W\cap I$.  The
vector $\vec{b}$ from $R_i$ that projects onto $\vec{a}$ over $I$
has the property that  $\vec{b}(w) \in s_w$ (since $\vec{a}$ does).
The structure of $R_i/\Theta $ worked out earlier implies that
$\vec{b}(l) \in s_l$ for all $l \in W\cap S_i$ since $(s_1, \ldots,
s_n)$ is a solution to $P/\Theta$. From this we conclude that
$\vec{b} \in R'_i$, as required.
\end{proof}

\section{Proof of the main result}

In the preceding section we established techniques for reducing
$k$-minimal instances of the CSP over domains from $\vr v_3$ to more
manageable instances.  The following theorem employs these
techniques to establish the finite relational width of constraint
languages arising from finite algebras in $CD(3)$.

Let $\m a$ be a finite algebra in $CD(3)$.  Then $\m a$ has term
operations $p_1(x,y,z)$ and $p_2(x,y,z) $ that satisfy the
equations:
\begin{eqnarray*}
p_i(x,y,x) &= &x \ \ \mbox{, $i = 1, 2$}\\
p_1(x,x,y) &=& x\\
p_1(x,y,y) &=& p_2(x,y,y)\\
p_2(x,x,y) &= &y
\end{eqnarray*}

Recall that associated with $\m a$ is the constraint language
$\Gamma_{\sm a} =
\inv(\m a)$, consisting of all relations invariant under the basic
operations of $\m a$.

\begin{thm}\label{main-result}
If $\Gamma$ is a subset of $\Gamma_{\sm a}$ whose relations all have
arity $k$ or less, for some $k \ge 3$, then $\Gamma$ has relational
width $k$.  In any case, if $M = |A|^2$ then $\Gamma_{\sm a}$ has
relational width $M$.
\end{thm}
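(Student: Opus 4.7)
\emph{Proof proposal.} The plan is to invoke Proposition~\ref{width-prop}: for the first statement, part~(6) with $m=k$ reduces the task to exhibiting a solution for every $k$-minimal instance $P$ of CSP($\langle \Gamma \rangle$) whose constraint relations are non-empty and of arity at most $k$; for the second statement, part~(5) reduces the task to exhibiting a solution for every $M$-minimal instance $P$ of CSP($\Gamma_{\sm a}$) whose constraint relations are non-empty. In either case the constraint relations lie in $\Gamma_{\sm a} = \inv(\m a)$, so $P$ can be viewed as a multi-sorted instance over a tuple $\vr a = (\m a_1, \ldots, \m a_n)$ of subalgebras of $\m a$, all of which lie in $\vr v_3$. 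By restricting each $\m a_i$ to $\proj_{\{i\}}(P_{\{i\}})$ we may further assume $P$ is subdirect.

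We proceed by induction on $N(P) := |A_1| + \cdots + |A_n|$, the base case being when every $|A_i| = 1$, where the forced assignment is a solution since the constraint relations are non-empty. For the inductive step, first apply Theorem~\ref{jon-trivial}: in the first setting the scopes all have size at most $k$, and in the second we have $k = M \ge |A|^2 \ge (\max_i |A_i|)^2$, so its hypotheses are met. We obtain a subdirect $k$-minimal instance $P'$ over \jon trivial subalgebras of the $\m a_i$, with non-empty constraints, whose solutions are solutions of $P$. If every domain of $P'$ is simple, then Theorem~\ref{induction-base} delivers a solution and we are done. Otherwise pick a non-simple domain of $P'$ (say, the first, $\m a_1$), a maximal proper congruence $\theta_1$ of $\m a_1$, and form the quotient instance $P'/\Theta$ as in the simple-reduction subsection.

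By construction $P'/\Theta$ is subdirect and $k$-minimal with non-empty constraints, its domains $\m a_i/\theta_i$ are \jon trivial (since homomorphic images of \jon trivial algebras are \jon trivial), the arities of its constraints are unchanged, and $\max_i |A_i/\theta_i| \le \max_i |A_i|$, so the arity/size hypothesis carries over. Since $|A_1/\theta_1| < |A_1|$ we have $N(P'/\Theta) < N(P') \le N(P)$, so the inductive hypothesis supplies a solution of $P'/\Theta$. Lemma~\ref{pullback} then yields a $k$-minimal instance $P''$ of the CSP with $A''_1 \subsetneq A_1$, all other $A''_i \subseteq A_i$, non-empty constraint relations, and every solution of $P''$ a solution of $P'$. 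Again the arity/size hypothesis persists and $N(P'') < N(P')$, so the inductive hypothesis supplies a solution of $P''$, hence of $P'$, hence of $P$.

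The main obstacle is orchestrating the two reductions from Section~\ref{reductions} so that every recursive call strictly decreases $N$ while preserving all the hypotheses needed by Theorems~\ref{jon-trivial} and~\ref{induction-base} and by Lemma~\ref{pullback}: subdirectness, $k$-minimality, non-emptiness of constraints, membership of the domains in $\vr v_3$, and either the arity bound $\le k$ of the first statement or the bound $k \ge (\max_i |A_i|)^2$ of the second. Each of these is routinely verified using the facts that subalgebras and quotients of algebras in $\vr v_3$ remain in $\vr v_3$, that neither operation enlarges the maximum domain size or alters constraint scopes, and that \jon triviality is preserved under quotients.
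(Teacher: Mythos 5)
Your proposal is correct and takes essentially the same route as the paper: reduce via Proposition~\ref{width-prop}(5)/(6) to $k$-minimal multi-sorted instances over members of $\vr v_3$, then alternate the reduction to \jon trivial domains (Theorem~\ref{jon-trivial}) with the quotient-and-pullback step (Lemma~\ref{pullback}) until Theorem~\ref{induction-base} applies. The only difference is presentational: the paper runs a minimal-counterexample argument over an explicitly enlarged multi-sorted language of quotients of subalgebras of $\m a$, whereas you phrase the same descent as an explicit induction on the sum of the domain sizes.
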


\begin{cor}
If $\Gamma$ is a finite subset of $\Gamma_{\sm a}$ then $\Gamma$ is
tractable and is of bounded width in the sense of Feder-Vardi.
Furthermore, $\Gamma_{\sm a}$ is globally tractable.
\end{cor}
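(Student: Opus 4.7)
The plan is to prove Theorem \ref{main-result}; the corollary then follows by combining the theorem with Proposition \ref{width-prop}(4) and the cited equivalence of bounded relational width and bounded Feder--Vardi width for finite languages. For the theorem, by parts (5) and (6) of Proposition \ref{width-prop} it suffices to verify that every $k$-minimal instance $P$ of CSP($\Gamma_{\sm a}$) with non-empty constraint relations has a solution, under the appropriate side condition (constraint arities at most $k$ in the first case, or $k = |A|^2$ in the second). Pass to the multi-sorted setting by replacing the effective domain of each variable with the corresponding projection of the constraints, a subuniverse of $\m a$; the resulting instance is subdirect and its domains $\m a_1, \ldots, \m a_n$ are subalgebras of $\m a$, hence members of $\vr v_3$.

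I would now induct on $\sum_{i=1}^{n} |A_i|$, the total size of the domains. As a first step, try to apply Theorem \ref{jon-trivial}: the required hypothesis on $k$ holds (either scopes of constraints have size at most $k$ because their arities do, or $k = |A|^2 \ge M^2$, where $M = \max_i |A_i| \le |A|$), so the theorem produces a subdirect $k$-minimal instance $P'$ over \jon trivial subalgebras with non-empty constraints, every solution of which is a solution of $P$. If this strictly shrinks any domain, the induction hypothesis applied to $P'$ completes the step. Otherwise all $\m a_i$ were already \jon trivial.

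In that case, either every $\m a_i$ is also simple---whereupon Theorem \ref{induction-base} immediately supplies a solution---or some $\m a_i$, say $\m a_1$, fails to be simple. In the non-simple case I would choose a maximal proper congruence $\theta_1$ of $\m a_1$, extend it to $\Theta = (\theta_1, \ldots, \theta_n)$ as in the preamble to Lemma \ref{pullback} (setting $\theta_i$ to be the kernel of the canonical homomorphism onto $\m a_1/\theta_1$ whenever one exists, as detected by Lemma \ref{connected_simple}), and consider $P/\Theta$. This is again a subdirect $k$-minimal instance over \jon trivial algebras (quotients preserve \jon triviality), with strictly smaller total domain size, so by induction it has a solution. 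Lemma \ref{pullback} then produces an instance $P'$ with $\m a'_1 \subsetneq \m a_1$ and non-empty constraints, each of whose solutions solves $P$; invoking the induction hypothesis a second time on $P'$ finishes the case.

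The main point requiring care is that each reduction must preserve the hypothesis needed for any subsequent application of Theorem \ref{jon-trivial}: pulling back via Lemma \ref{pullback} yields subalgebras that may fail to be \jon trivial, so Theorem \ref{jon-trivial} must be reapplied at the next round, and one must check that the arity/scope bound is preserved (neither reduction enlarges constraint scopes) and that the alternative bound $k \ge M^2$ is preserved (since the new $M$ can only shrink). Given the theorem, the corollary drops out: for finite $\Gamma \subseteq \Gamma_{\sm a}$ with arities bounded by some integer $k \ge 3$, the first half of Theorem \ref{main-result} together with Proposition \ref{width-prop}(4) gives tractability, and the cited equivalence with bounded Feder--Vardi width completes the first assertion, while the second half of the theorem applied with $k = |A|^2$ yields global tractability of $\Gamma_{\sm a}$ in the same way.
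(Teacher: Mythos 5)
Your proposal is correct and follows essentially the same route as the paper: the corollary is deduced from Theorem \ref{main-result} via Proposition \ref{width-prop} and the cited equivalence with Feder--Vardi width, and the theorem itself is proved by combining Theorem \ref{jon-trivial}, Lemma \ref{pullback} and Theorem \ref{induction-base}, with your induction on total domain size playing the role of the paper's minimal-counterexample argument over the enlarged multi-sorted language of quotients of subalgebras of $\m a$. The only presentational difference is that the paper makes the closure under quotients explicit by passing to the class $\vr h$ up front, which is exactly the care you flag at the end.
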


\begin{proof} (of the Theorem)
We may assume  that $\m a = (A, p_0, p_1, p_2, p_3)$, where
$p_0(x,y,z) = x$ and $p_3(x,y,z)  = z$ for all $x$, $y$, $z \in A$
since if we can establish the theorem for this sort of algebra, it
will then apply to all algebras with universe $A$ that have the
$p_i$ as term operations.

Our assumption on $\m a$ places it in the variety $\vr v_3$ and so
the results from the previous section apply.  Let $\Gamma$ be a
subset of $\Gamma_{\sm a}$.  If $\Gamma$ is finite, let $k$ be the
maximum of 3 and the arities of the relations in $\Gamma$ and
replace $\Gamma$ by $\Gamma_k$, the set of all relations in
$\Gamma_{\sm a}$ of arity $k$ or less.  Establishing relational
width $k$ for this enlarged $\Gamma$ will, of course, be a stronger
result. If $\Gamma$ is not finite, replace it by $\Gamma_{\sm a}$
and set $k = |A|^2$.  We will show that in either case, $\Gamma$ has
relational width $k$.

From statements (5) or (6) of Proposition \ref{width-prop} it will
suffice to show that if $P$ is a $k$-minimal instance of
 $CSP(\Gamma)$ whose constraint relations are all non-empty then $P$
 has a solution.
We may  express $P$ in  the form $(\vr a, \vr c)$ where $\vr a = (\m
a, \m a, \ldots, \m a)$ is a sequence of length $n$, for some $n >
0$, and where $\vr c $ is a set of  constraints  of the form $C =
(S, R)$, for some non-empty subset $S$  of $\{1,2, \ldots, n\}$ and
some non-empty subuniverse $R$ of $\m a^{|S|}$.

In order to apply the results from the previous section as
seamlessly as possible, we enlarge our language $\Gamma$ to a
closely related, but larger, multi-sorted language.  Let $\vr h$ be
the set of all quotients of subalgebras of $\m a$.  Note that $\vr
h$ is finite and all algebras in it have size at most $|A|$. If
$\Gamma = \Gamma_k$, replace it with the set of all subuniverses of
$l$-fold products of algebras from $\vr h$, for all $1 \le l \le k$,
and otherwise, replace it by the set of all subuniverses of finite
products of algebras from $\vr h$.  In both cases, we have extended
our original constraint language.  $P$ can now be viewed as a
$k$-minimal instance of $CSP(\Gamma)$, the class of multi-sorted
CSPs whose instances have domains  from $\vr h$ and whose constraint
relations are from $\Gamma$.

We now prove that every $k$-minimal instance of $CSP(\Gamma)$ whose
constraint relations are non-empty has a solution.  If this is not
so, let $Q$ be a counter-example such that the sum of the sizes of
the domains of $Q$ is as small as possible.  Note that independent
of this size, no domain of $Q$ is bigger than $|A|$ since they all
come from $\vr h$.  Also note that $Q$ must be subdirect.

From Theorem \ref{jon-trivial} it follows that all of the domains of
$Q$ are \jon trivial.  Then, from Lemma \ref{pullback} we can deduce
that all of the domains of $Q$ are simple.  If not, then either
there is a proper quotient of $Q$ that is $k$-minimal and that does
not have a solution, or the $k$-minimal instance produced by the
lemma cannot have a solution. In either case, we contradict the
minimality of $Q$.  Thus $Q$ is a subdirect, $k$-minimal instance of
$CSP(\Gamma)$ whose domains are all simple and \jon trivial and
whose constraint relations are all non-empty.  From Theorem
\ref{induction-base} we conclude that in fact $Q$ has a solution.
This contradiction completes the proof of the theorem.
\end{proof}

\section{Conclusion}

The main result of this paper establishes that for certain
 constraint languages $\Gamma$ that arise from
finite algebras that generate congruence distributive varieties, the
problem class $CSP(\Gamma)$ is tractable. This class of constraint
languages includes those that are compatible with a majority
operation but also includes some languages that were not previously
known to be tractable.

We feel that the proof techniques employed in this paper may be
useful in extending our results to include all constraint languages
that arise from finite algebras that generate congruence
distributive varieties and perhaps beyond.

\medskip
{\noindent \bf Problem 1:}  Extend the algebraic tools developed to
handle algebras in $CD(3)$ to algebras in $CD(n)$ for any $n > 3$.
In particular, generalize the notion of a \jon ideal to this wider
setting.

\medskip

We note that in \cite{valeriote-intprop} some initial success at
extending the notion of a \jon ideal has been obtained.

The bound on relational width established for the languages
addressed in this paper seems to depend on the size of the
underlying domain of the language. Nevertheless, we are not aware of
any constraint language that has finite relational width that is not
of relational width 3.

\medskip
{\noindent \bf Problem 2:}  For each $n > 3$, produce a constraint
language $\Gamma_n$ that has relational width $n$ and not $n-1$. As
a strengthening of this problem, find  $\Gamma_n$ that in addition
have compatible near unanimity operations.

\section{Acknowledgments}

The first author acknowledges the support of the Hungarian National
Foundation for Scientific Research (OTKA), grants no. T043671 and
T043034, while the second, the support of the Natural Sciences and
Engineering Research Council of Canada.  Support of the Isaac Newton
Institute for  Mathematical Sciences and the organizers of the Logic
and Algorithms programme is also gratefully acknowledged.

\bibliographystyle{plain}

%\bibliography{ua}

\end{document}